\documentclass[11pt,a4paper,reqno]{amsart}
\usepackage{amsthm,amsmath,amsfonts,amssymb,amsxtra,appendix,bookmark,euscript,delarray,dsfont,mathrsfs}
\usepackage{enumerate}
\usepackage{amssymb}
\usepackage[utf8]{inputenc}
\usepackage{xcolor}
\usepackage{graphicx}
\usepackage{upgreek}
\usepackage[normalem]{ulem}
\usepackage{ stmaryrd }

\allowdisplaybreaks

\usepackage{tikz}
\usetikzlibrary{decorations.pathreplacing, calc}

\theoremstyle{plain}
\newtheorem{theorem}{Theorem}
\newtheorem{lemma}{Lemma}
\newtheorem*{lemma*}{Lemma}
\newtheorem{corollary}{Corollary}
\newtheorem{proposition}{Proposition}[section]

\newtheorem*{conjecture*}{Conjecture}
\newtheorem{assumption}{Assumption}
\theoremstyle{definition}

\theoremstyle{remark}
\newtheorem{remark}{Remark}

\numberwithin{equation}{section}

\renewcommand{\d}{\mathrm{d}}                       

\newcommand{\dd}{\,\mathrm{d}}    
\newcommand{\ii}{\mathrm{i}}
\newcommand{\comm}[2]{\left[#1,#2\right]}

\DeclareMathOperator{\Ker}{Ker}

\DeclareMathOperator{\tr}{Tr}

\renewcommand{\leq}{\leqslant}
\renewcommand{\geq}{\geqslant}
\renewcommand{\ge}{\geqslant}
\renewcommand{\le}{\leqslant}
\def\nn{\nonumber}

\renewcommand{\to}{\rightarrow}
\renewcommand{\phi}{\varphi}

\newcommand{\norm}[1]{\left\lVert #1 \right\rVert}
\newcommand{\abs}[1]{\left\lvert#1\right\rvert}
\renewcommand{\phi}{\varphi}
\newcommand{\ket}[1]{|#1 \rangle}       
\newcommand{\bra}[1]{\langle #1 |}      
\newcommand		{\lt}			{\left}				
\newcommand		{\rt}			{\right}

\newcommand		{\bangle}[1]	{\lt\langle #1\rt\rangle}
	
\newcommand		{\inprod}[2]	{\bangle{#1, #2}}

\newcommand\R{{\ensuremath {\mathbb R} }}

\newcommand\1{{\ensuremath {\mathds 1} }}
\newcommand{\gH}{\mathfrak{H}}

\newcommand{\gX}{\mathfrak{X}}

\newcommand{\gS}{\mathfrak{S}}

\newcommand{\cD}{\mathcal{D}}
\newcommand{\cU}{\mathcal{U}}

\newcommand{\cL}{\mathcal{L}}

\usepackage{tcolorbox}
\newtcolorbox{mybox}{colback=white,colframe=orange,boxrule=1.5pt,arc=3mm}

\title[Purified Projection Method and Uhlmann Fidelity]{Purified Projection Method and Uhlmann Fidelity for Mixed Hartree Dynamics}
\author[H. Liang]{Hao Liang}
\address{School of Mathematical Sciences, Peking University, Beijing, 100871, China}
\email{leunghao@stu.pku.edu.cn}

\author[Z. Wang]{Zhenfu Wang}
\address{Beijing International Center for Mathematical Research, Peking University, Beijing, 100871, China}
\email{zwang@bicmr.pku.edu.cn}

\begin{document}

\begin{abstract}
We give a purification and fidelity formulation of the projection
method for mixed Hartree data.  For the mean-field evolution of $N$-particle
density matrices, we prove quantitative propagation of chaos for all fixed
marginals, first in squared Uhlmann fidelity and then in trace norm via the
Fuchs--van de Graaf inequality.  The argument applies a rank-one Pickl-type
counting estimate in a purified one-particle space and uses monotonicity of
fidelity under partial trace to return to the physical variables.  The result
allows singular interactions satisfying a projected-square bound, including
$L^{2r}$ interactions and the Coulomb potential.
\end{abstract}

\subjclass[2020]{Primary 81V70; Secondary 35Q55, 35Q40, 81P16.}

\keywords{Mean-field limit, mixed Hartree equation, Pickl's projection method,
Uhlmann fidelity, singular interactions.}

\maketitle

\section{Introduction}

Mean-field theory describes the effective dynamics of a large system of weakly
interacting particles by a nonlinear one-particle equation.  In the pure-state
case, one starts from the $N$-body Schr\"odinger dynamics generated by
\[
    H_N
    =
    \sum_{j=1}^N h_j
    +
    \frac1{N-1}\sum_{1\leq i<j\leq N}V(x_i-x_j),
\]
and expects initially factorized data to remain approximately factorized at
the level of fixed marginals.  More precisely, if
$\Psi_{N,0}\simeq \phi_0^{\otimes N}$, then $\Psi_{N,t}$ is expected to be
described by $\phi_t^{\otimes k}$ for every fixed $k$, where $\phi_t$ solves
the Hartree equation
\[
    \ii\partial_t\phi_t=(h+V*|\phi_t|^2)\phi_t .
\]
Equivalently, if $p_t=\ket{\phi_t}\bra{\phi_t}$, then
\[
    \ii\partial_t p_t=[h+V*\rho_{p_t},p_t],
    \qquad
    \rho_{p_t}(x)=p_t(x,x).
\]
This density-operator formulation extends naturally to mixed one-particle
states.  For a density operator $\gamma_t$, the corresponding effective
equation is the mixed Hartree equation
\[
    \ii\partial_t\gamma_t=[h+V*\rho_{\gamma_t},\gamma_t],
    \qquad
    \rho_{\gamma_t}(x)=\gamma_t(x,x).
\]
The mean-field problem is to prove that, for each fixed $k$, the physical
marginal $\Gamma_t^{N:k}$ converges to $\gamma_t^{\otimes k}$, where
$\Gamma_t^N$ solves the physical $N$-body von Neumann equation.

For pure initial data, one of the earliest results is due to
Spohn~\cite{spohn1980kinetic}, who proved trace-norm convergence of the
one-particle marginal for bounded interactions.  This was extended by Ginibre
and Velo~\cite{ginibre1979classical,ginibre1979classical2}.  Using the BBGKY
hierarchy, Erd\H{o}s and Yau~\cite{erdos2001derivation} extended Spohn's
result to the case of Coulomb interactions.  Rodnianski and
Schlein~\cite{rodnianski2009quantum} used a coherent-state approach to obtain
a quantitative trace-norm estimate for the one-particle marginal with rate
$O(N^{-1/2})$.  Knowles and Pickl~\cite{knowles2010mean} obtained a similar
rate by a projection method, allowing more singular interactions.  For
Pickl's projection method, we refer to~\cite{pickl2011simple}; more
recently, Porat and Golse~\cite{porat2024pickl} reformulated this method using
the quantum empirical measure introduced in~\cite{golse2019empirical}.  The
optimal trace-norm rate $O(N^{-1})$ was obtained by Chen, Lee, and
Schlein~\cite{chen2011rate}, including Coulomb interactions.  Related methods
also apply to many-body semi-relativistic Schr\"odinger equations describing a
boson star: Elgart and Schlein~\cite{elgart2007mean} proved convergence
without a rate, while Lee~\cite{lee2013rate} obtained the optimal rate.

For mixed initial data, Anapolitanos~\cite{anapolitanos2011rate} obtained
quantitative convergence rates for the Hartree--von Neumann mean-field limit by
generalizing Pickl's projection method \cite{pickl2011simple} to density matrices,
using Hilbert--Schmidt amplitudes of the $N$-particle and one-particle density
matrices.  Golse, Paul, and Pulvirenti~\cite{golse2018derivation} used a BBGKY
approach to obtain quantitative estimates; combined with the quantum Wasserstein
distance developed in~\cite{golse2016mean}, their method also yields bounds
uniform in the Planck constant $\hbar$.  Further results for bosonic mixed
initial data can be found in~\cite{deuchert2023dynamics}, and for fermionic
mixed initial data in~\cite{benedikter2016mean}.  More recently, in
\cite{guo2026quantum}, Guo and the authors of this paper obtained quantitative estimates for mixed initial data
by a quantum relative entropy method, motivated by the work of Jabin and the
second-named author on the classical mean-field limit~\cite{jabin2016mean,jabin2018quantitative}.

\subsection{Motivation}

The motivation for this paper comes from the formal similarity between Pickl's
projection method~\cite{pickl2011simple} and the quantum relative entropy
method~\cite{guo2026quantum}.  Both approaches rely on a Gronwall-type
argument to obtain quantitative estimates, and a key advantage of both is that
they avoid the use of BBGKY hierarchies. In the entropy
approach, differentiating the relative entropy along the physical von Neumann
flow produces a commutator involving the fluctuation Hamiltonian
$H_N-H_N^{\gamma_t}$, schematically of the form
\[
    -\ii\,
    \tr\!\left(
        \Gamma_t^N
        [H_N-H_N^{\gamma_t},\log\gamma_t^{\otimes N}]
    \right).
\]
In Pickl's method, for a pure Hartree state
$p_t=\ket{\phi_t}\bra{\phi_t}$, one introduces $q_t=\1-p_t$ and the counting
observable $\widehat n_t=N^{-1}\sum_{j=1}^Nq_j^t$.  Its derivative has the
same structural form:
\[
    \frac{\dd}{\dd t}\tr(\Gamma_t^N\widehat n_t)
    =
    \ii\,
    \tr\!\left(
        \Gamma_t^N[H_N-H_N^{p_t},\widehat n_t]
    \right),
\]
see Definition 2.2 and equation (6) in \cite{pickl2011simple}, with the choice $n(k):=k/N$, in the notation of \cite{pickl2011simple}. By replacing the counting observable $\widehat n$ with 
$\widehat n^{\,j}$, the projection method can also handle more singular
interactions of delta-function type, which is particularly useful in the
derivation of the Gross--Pitaevskii equation. See \cite{pickl2010derivation,pickl2015derivation,jeblick2019derivation}.

Thus, both methods compare the microscopic and effective dynamics through the
same fluctuation Hamiltonian.  The difference lies in the observable against
which this fluctuation is tested: the entropy method uses a logarithmic
observable, while Pickl's method uses a projection-counting observable.

The two approaches are complementary.  The projection-counting argument is
well adapted to a pure reference state.  Its basic object is the rank-one
projection $p_t$, and the passage from the counting functional to trace-norm
convergence uses this rank-one structure.  By contrast, the quantum relative
entropy method is naturally suited to faithful mixed reference states, but it
is singular for pure references due to the logarithm.  The similarity of
the two derivative identities suggests that Pickl's counting functional should
be interpreted as an information-theoretic defect which remains meaningful in
the mixed-state regime.

This defect is precisely a Uhlmann fidelity defect.  We use the squared
fidelity convention
\[
    \mathsf F(\rho,\eta):=\left(\tr\sqrt{\sqrt{\rho}\eta\sqrt{\rho}} \right)^2.
\]
If the second argument is the rank-one projection
$p_t=\ket{\phi_t}\bra{\phi_t}$, then
$\mathsf F(\rho,p_t)=\tr(\rho p_t)$. See \cite{wilde2013quantum,nielsen2010quantum} for more details. Uhlmann fidelity is closely related to quantum relative entropy through the
family of sandwiched $\alpha$-R\'enyi entropies.  In
\cite{muller2013quantum}, the authors introduced the sandwiched
$\alpha$-R\'enyi entropies as a one-parameter generalization of the quantum
relative entropy, for $\alpha\in(0,1)\cup(1,\infty)$.  At $\alpha=1/2$, this
quantity reduces, up to the conventional normalization, to the logarithm of
the Uhlmann fidelity, while in the limit $\alpha\to1$ it recovers the quantum
relative entropy.  Further properties of the sandwiched $\alpha$-R\'enyi
entropies can be found in
\cite{frank2013monotonicity,mosonyi2015quantum}. Therefore, for symmetric $\Gamma_t^N$,
\[
    \tr(\Gamma_t^N\widehat n_t)
    =
    \tr(\Gamma_t^{N:1}q_t)
    =
    1-\tr(\Gamma_t^{N:1}p_t)
    =
    1-\mathsf F(\Gamma_t^{N:1},p_t).
\]
Hence, Pickl's relative excitation number is exactly the one-particle fidelity
defect from the pure Hartree state.

The preceding identity suggests replacing $p_t$ by a mixed Hartree state
$\gamma_t$ and studying $1-\mathsf F(\Gamma_t^{N:1},\gamma_t)$.  This is a
natural one-particle quantity, but it is not a closed stability functional for
general mixed initial data.

The obstruction is dynamical.  Small one-particle fidelity defect at time
$t=0$ does not determine the correct positive-time effective trajectory.  Let
$\phi_0$ and $\psi_0$ be two orthonormal one-particle states and consider
\[
    \Gamma_0^N
    =
    \frac12\ket{\phi_0^{\otimes N}}\bra{\phi_0^{\otimes N}}
    +
    \frac12\ket{\psi_0^{\otimes N}}\bra{\psi_0^{\otimes N}}.
\]
Then
\[
    \Gamma_0^{N:1}
    =
    \gamma_0
    :=
    \frac12\ket{\phi_0}\bra{\phi_0}
    +
    \frac12\ket{\psi_0}\bra{\psi_0},
\]
and therefore $1-\mathsf F(\Gamma_0^{N:1},\gamma_0)=0$.  However, the two
branches of the microscopic state evolve independently.  At the mean-field
level this gives the mixture
\[
    \frac12\ket{\phi_t}\bra{\phi_t}
    +
    \frac12\ket{\psi_t}\bra{\psi_t},
\]
where $\phi_t$ and $\psi_t$ solve their own Hartree equations.  This is not,
in general, the solution of the mixed Hartree equation starting from
$\gamma_0$, because the vector field
$\gamma\mapsto [h+V*\rho_\gamma,\gamma]$ is nonlinear.  Thus, the condition
$1-\mathsf F(\Gamma_0^{N:1},\gamma_0)\ll1$ does not prevent the physical
evolution from selecting a different effective mixed trajectory.

This example shows that the one-particle mixed fidelity defect alone does not
encode how the mixed state is prepared.  Adding this missing information
through higher marginals would lead back to a hierarchy-type argument.  The
goal here is instead to preserve the projection-counting mechanism.

\subsection{Purification and the main idea}

The remedy is to purify the mixed Hartree state.  We enlarge the one-particle
space to $\widetilde\gH=\gH\otimes\gX$, where $\gX$ is a separable auxiliary
Hilbert space, and choose a normalized vector $\Phi_t\in\widetilde\gH$ such
that $\gamma_t=\tr_\gX\ket{\Phi_t}\bra{\Phi_t}$.  The lifted state evolves by
a Hartree equation on the enlarged space, and its partial trace solves the
mixed Hartree equation.

In the lifted space the reference state is pure again.  Hence, we can use the
genuine counting observable
\[
    \widehat n_t=\frac1N\sum_{j=1}^N q_j^t,
    \qquad
    q_j^t=\1-\ket{\Phi_t}\bra{\Phi_t}_j,
\]
and define the purified Pickl functional
\[
    \widetilde\alpha_N(t)
    =
    \left\langle
        \widetilde\Psi_{N,t},
        \widehat n_t\widetilde\Psi_{N,t}
    \right\rangle .
\]
The main estimate proves that, under a projected square bound on the
interaction singularity,
\[
    \widetilde\alpha_N(t)
    \leq
    \exp\left(C\int_0^t\Lambda_s\dd s\right)
    \left(\widetilde\alpha_N(0)+\frac1N\right).
\]
The quantity $\Lambda_t$ measures the singularity of
$V(x-y)-(V*\rho_t)(x)$ after projection onto the lifted Hartree state, where
$\rho_t(x)=\norm{\Phi_t(x)}_{\gX}^2$.  Concrete sufficient conditions for this
bound, including $L^{2r}$-type and Coulomb-type interactions, are given later.

Finally, one returns to the physical variables by the monotonicity of Uhlmann
fidelity under partial trace.  Since $\Gamma_t^{N:k}$ and
$\gamma_t^{\otimes k}$ are obtained from the lifted objects by tracing out the
auxiliary variables, the lifted counting estimate implies
\[
    1-\mathsf F(\Gamma_t^{N:k},\gamma_t^{\otimes k})
    \leq
    k\,\widetilde\alpha_N(t).
\]
The trace-norm convergence follows from the Fuchs--van de Graaf inequality in Lemma~\ref{lem:fuchs-van-de-graaf}.

The purpose of this paper is to formulate a mixed-state version of the
projection method through purification and fidelity.  A closely related
approach was developed by Anapolitanos~\cite{anapolitanos2011rate}, who used
Hilbert--Schmidt amplitudes to extend Pickl's argument to density matrices.  Here we recast the counting argument as the usual
rank-one Pickl estimate in a purified one-particle space, and then return to
the physical variables by monotonicity of Uhlmann fidelity under partial trace.

\subsection{Organization of the paper}

The paper is organized as follows.  Section~\ref{sec: purified main} states the
main fidelity propagation result and the purified Pickl estimate.  Section~\ref{sec:
preliminaries} collects the basic facts on partial traces, lifted dynamics, and
fidelity.  In Section~\ref{sec: proofs purified pickl}, we prove the main theorem
assuming the purified Pickl estimate.  Section~\ref{sec: singular interactions}
verifies the projected-square assumption for several classes of singular
interactions.  The proof of the purified Pickl estimate is given in
Appendix~\ref{sec: purified pickl}.

\subsection*{Acknowledgments} We thank Gaoyue Guo for helpful discussions. This work was partially supported by the National Key R\&D Program of China (Project No.~2024YFA1015500) and the NSFC (Grant Nos.~12595282 and 12171009).

\section{Main Results}\label{sec: purified main}

Before stating the main results, we introduce some notation. Let
\begin{equation*}
    \Omega\in\{\mathbb T^d,\R^d\},
\end{equation*}
where $\mathbb T^d$ denotes the $d$-dimensional torus and $\R^d$ denotes
the $d$-dimensional Euclidean space.  We write
\begin{equation*}
    \gH:=L^2(\Omega),
    \qquad
    \gH_N:=\gH^{\otimes N}.
\end{equation*}

Let $\cL(\gH)$ and $\cL^1(\gH)$ denote the spaces of bounded and trace-class operators on $\gH$, respectively. We denote by $\cD(\gH)$ the set of density operators on $\gH$, i.e. operators $\gamma$ satisfying 
\begin{equation*}
    \gamma=\gamma^{*}\geq 0,\qquad \tr_{\gH}(\gamma)=1.
\end{equation*}

Since we consider indistinguishable particles, we restrict attention to permutation-invariant density operators on $\gH_N$. To be precise, for each permutation $\pi\in\gS_N$, let $U_{\pi}$ be the unitary operator on $\gH_N$ defined by
\begin{equation*}
    (U_{\pi}\Psi_N)(x_1,\ldots,x_N):=\Psi_N(x_{\pi^{-1}(1)},\ldots,x_{\pi^{-1}(N)}).
\end{equation*}
We say that an $N$-particle density operator $\Gamma$ on $\gH_N$ is permutation-invariant if
\begin{equation*}
    U_{\pi}\Gamma U_{\pi}^{*}=\Gamma, \qquad \text{for all } \pi\in\gS_N.
\end{equation*}
We denote by $\cD_s(\gH_N)\subset \cD(\gH_N)$ the set of permutation-invariant $N$-particle density operators on $\gH_N$. Let $\Gamma^N\in\cD_s(\gH_N)$ and $k\leq N$. We define the $k$-th marginal $\Gamma^{N:k}$ of $\Gamma^N$ by taking the partial trace over the remaining $N-k$ particles:
\begin{equation*}
    \Gamma^{N:k}:=\tr_{k+1,\ldots,N}(\Gamma^N).
\end{equation*}

Let $h$ be a self-adjoint one-particle operator on $\gH$, and let $V$ be a real-valued even interaction potential.  We are interested in the physical $N$-body dynamics generated by
\begin{equation}\label{eq: physical HN main}
    H_N
    :=
    \sum_{j=1}^N h_j+
    \frac1{N-1}\sum_{1\leq i<j\leq N}V(x_i-x_j)
\end{equation}
on $\gH_N$.  Given an initial datum $\Gamma_0^N\in\cD_s(\gH_N)$, the evolution is described by the physical von Neumann equation
\begin{equation}\label{eq: physical von Neumann main}
    \ii\partial_t\Gamma_t^N=[H_N,\Gamma_t^N],
    \qquad
    \Gamma_t^N\big|_{t=0}=\Gamma_0^N.
\end{equation}

The expected effective dynamics for the one-particle state is the physical Hartree equation.  For a one-particle density operator $\gamma$, we define its spatial density by
\begin{equation*}
    \rho_\gamma(x):=\gamma(x,x).
\end{equation*}
Given $\gamma_0\in\cD(\gH)$, let $\gamma_t$ solve
\begin{equation}\label{eq: physical mixed Hartree main}
    \ii\partial_t\gamma_t=[h+V\ast\rho_{\gamma_t},\gamma_t],
    \qquad
    \gamma_t\big|_{t=0}=\gamma_0.
   \end{equation}
The goal is to prove a quantitative fixed marginal mean-field limit of the form
\begin{equation*}
    \Gamma_t^{N:k}\longrightarrow \gamma_t^{\otimes k},
    \qquad k\geq1 \text{ fixed}.
\end{equation*}

We use the squared Uhlmann fidelity
\begin{equation}\label{eq: Uhlmann fidelity main}
    \mathsf F(\rho,\gamma):=\norm{\sqrt\rho\sqrt\gamma}_{1}^2.
\end{equation}
If the second argument is a rank-one projection $P=\ket\phi\bra\phi$ with $\norm{\phi}=1$, then
\begin{equation*}
    \mathsf F(\rho,P)=\tr(\rho P).
\end{equation*}

From now on, we fix a constant $T>0$, and let $[0,T]$ be the time interval under consideration. We will work with the following assumptions.

\begin{assumption}\label{ass:analytic-framework}
The operator $h$ is self-adjoint and bounded from below on
$\mathfrak H=L^2(\Omega)$.  The real even interaction $V$ is form-admissible
with respect to $h$ in the sense that the mean-field $N$-body Hamiltonians defined in \eqref{eq: physical HN main}
are self-adjoint and bounded from below through closed quadratic forms.

The physical Hartree  equation \eqref{eq: physical mixed Hartree main}
is well posed on $[0,T]$ in the mild sense. 
\end{assumption}

\begin{assumption}[Projected square bound]\label{ass: projected square bound}
Define
\begin{equation}\label{eq: Lambda main}
    \Lambda_t^2
    :=
    \operatorname*{ess\,sup}_{y\in\Omega}
    \int_\Omega
    \rho_{\gamma_t}(x)\abs{V(x-y)-V\ast\rho_{\gamma_t}(x)}^2\dd x.
\end{equation}
We assume that
\begin{equation}\label{eq: Lambda integrability main}
    \int_0^T\Lambda_t\dd t<\infty
\end{equation}
for the time interval under consideration. 
\end{assumption}

\begin{remark}
Assumption~\ref{ass: projected square bound} is the singular estimate used in
the projection-counting part of the proof.
\end{remark}

\begin{theorem}[Propagation of chaos]\label{tm:fidelity-consequence}
Let $\Gamma_0^N\in\cD_s(\gH_N)$ and $\gamma_0\in\cD(\gH)$. Under Assumptions \ref{ass:analytic-framework} and \ref{ass: projected square bound}, for all $1\leq k\leq N$ and all $t\in[0,T]$,
\begin{equation}\label{eq: fidelity final bound k}
    1-
    \mathsf F\left(\Gamma_t^{N:k},\gamma_t^{\otimes k}\right)
    \leq
    2k\exp\left(8\int_0^t\Lambda_s\dd s\right)
    \left(1-\mathsf F \left( \Gamma_0^N,\gamma_0^{\otimes N} \right)  +\frac1N\right),
\end{equation}
where $\Gamma_t^N$ solves the physical von Neumann equation \eqref{eq: physical von Neumann main}, and $\gamma_t$ solves the physical Hartree equation \eqref{eq: physical mixed Hartree main}.  Consequently,
\begin{equation}\label{eq: trace final bound k}
    \norm{\Gamma_t^{N:k}-\gamma_t^{\otimes k}}_1
    \leq
    2\sqrt{2k}\,
    \exp\left(4\int_0^t\Lambda_s\dd s\right)
    \left(1-\mathsf F \left( \Gamma_0^N,\gamma_0^{\otimes N} \right)  +\frac1N\right)^{1/2}.
\end{equation}
\end{theorem}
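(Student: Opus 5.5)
The plan is to lift everything to the purified space, apply the rank-one Pickl estimate there, and return to the physical variables by monotonicity of fidelity. The main difficulty is the initial step: choosing the purifications so that the lifted initial defect is controlled by $1-\mathsf F(\Gamma_0^N,\gamma_0^{\otimes N})$ rather than by an independent quantity.

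\textbf{Choice of purifications.} Fix a separable auxiliary space $\gX$ and a purification $\Phi_0\in\widetilde\gH=\gH\otimes\gX$ of $\gamma_0$. Then $\Phi_0^{\otimes N}$ purifies $\gamma_0^{\otimes N}$. By Uhlmann's theorem there is a purification $\widetilde\Psi_{N,0}\in\widetilde\gH^{\otimes N}$ of $\Gamma_0^N$ with
\[
    \abs{\langle \Phi_0^{\otimes N},\widetilde\Psi_{N,0}\rangle}^2
    =
    \mathsf F(\Gamma_0^N,\gamma_0^{\otimes N}).
\]
Permutation symmetry needs care, since the purified counting argument uses symmetric lifted states. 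I will symmetrize $\widetilde\Psi_{N,0}$ under the joint permutations of the pairs $(x_j,\xi_j)$. Because $\Gamma_0^N$ and $\Phi_0^{\otimes N}$ are invariant, the symmetrized vector, after renormalization, still purifies $\Gamma_0^N$ and does not decrease the overlap. Concretely, the optimal Uhlmann purification $(\sqrt{\Gamma_0^N}\otimes\1)$ applied to a canonical symmetric vector can be chosen symmetric.

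\textbf{Initial and propagated bound.} For symmetric $\widetilde\Psi$ we have $\widetilde\alpha_N(0)=\tr(\widetilde\Gamma^{N:1}q_0)$. The pointwise operator inequality $\widehat n\le \1-P_{\Phi_0}^{\otimes N}$ then gives
\[
    \widetilde\alpha_N(0)\le 1-\mathsf F(\Gamma_0^N,\gamma_0^{\otimes N}).
\]
Next I evolve. Lift $H_N$ to $\widetilde H_N=\sum_j (h\otimes\1_\gX)_j+\frac1{N-1}\sum_{i<j}V(x_i-x_j)$, acting trivially on the auxiliary variables, and lift the Hartree flow to $\Phi_t$ on $\widetilde\gH$. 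By the preliminaries on lifted dynamics, $\tr_\gX$ of these flows gives $\Gamma_t^N$ and $\gamma_t$, because partial trace over $\gX^{\otimes N}$ commutes with evolutions acting trivially on $\gX$. The purified Pickl estimate, proved in the appendix, then yields
\[
    \widetilde\alpha_N(t)\le \exp\Bigl(C\int_0^t\Lambda_s\dd s\Bigr)\Bigl(\widetilde\alpha_N(0)+\frac1N\Bigr),
\]
with $C=8$ and the density $\rho_t=\rho_{\gamma_t}$. This identification ensures that the lifted $\Lambda_t$ coincides with the one in Assumption~\ref{ass: projected square bound}.

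\textbf{Return to physical variables.} Let $P_k=\ket{\Phi_t^{\otimes k}}\bra{\Phi_t^{\otimes k}}$. Then $\1-P_k\le\sum_{j=1}^k q_j^t$, so by symmetry
\[
    1-\mathsf F(\widetilde\Gamma_t^{N:k},P_k)=\tr\bigl(\widetilde\Gamma_t^{N:k}(\1-P_k)\bigr)\le k\,\widetilde\alpha_N(t).
\]
Monotonicity of fidelity under the partial trace $\tr_{\gX^{\otimes k}}$ gives
\[
    \mathsf F(\Gamma_t^{N:k},\gamma_t^{\otimes k})\ge \mathsf F(\widetilde\Gamma_t^{N:k},P_k).
\]
Combining these bounds gives \eqref{eq: fidelity final bound k}. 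The factor $2$ is only slack, absorbing any loss from the symmetrization normalization. The trace-norm bound \eqref{eq: trace final bound k} then follows from Lemma~\ref{lem:fuchs-van-de-graaf} in the form $\norm{\rho-\eta}_1\le 2\sqrt{1-\mathsf F(\rho,\eta)}$, taking square roots of the exponential and of $2k$.
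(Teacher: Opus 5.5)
\textbf{Gap: the symmetric near-optimal purification.} The gap is at the step you yourself flagged as the main difficulty. You need a \emph{symmetric} purification $\widetilde\Psi_{N,0}$ of $\Gamma_0^N$ whose overlap with $\Phi_0^{\otimes N}$ is (nearly) $\sqrt{\mathsf F(\Gamma_0^N,\gamma_0^{\otimes N})}$, and symmetrizing an Uhlmann maximizer and renormalizing does not produce one. In the Hilbert--Schmidt picture ($\gX=\overline{\gH}$), every purification of $\Gamma_0^N$ has the form $|\sqrt{\Gamma_0^N}W\rangle\rangle$ with $W$ a partial isometry. The joint symmetrizer $S=\frac1{N!}\sum_\pi\widetilde U_\pi$ replaces $W$ by the average $\overline W=\frac1{N!}\sum_\pi U_\pi WU_\pi^*$, and the new partial trace is $\sqrt{\Gamma_0^N}\,\overline W\,\overline W^*\sqrt{\Gamma_0^N}$. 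Since $\overline W$ is merely a contraction, this is in general not a multiple of $\Gamma_0^N$.

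In fact your claim refutes itself. Because $\Phi_0^{\otimes N}$ is symmetric, $\langle\Phi_0^{\otimes N},S\widetilde\Psi\rangle=\langle\Phi_0^{\otimes N},\widetilde\Psi\rangle$. So if the maximizer is not already symmetric, then $\norm{S\widetilde\Psi}<1$, and the renormalized vector has overlap strictly larger than $\sqrt{\mathsf F}$ (when $\mathsf F>0$). Uhlmann's theorem forbids this for any purification of $\Gamma_0^N$.

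Your fallback sentence does not close the gap either. Read literally, $(\sqrt{\Gamma_0^N}\otimes\1)$ applied to the canonical symmetric vector is $|\sqrt{\Gamma_0^N}\rangle\rangle$. This is symmetric, but its overlap with $\Phi_0^{\otimes N}=|\sqrt{\gamma_0^{\otimes N}}\rangle\rangle$ is $\tr(\sqrt{\Gamma_0^N}\sqrt{\gamma_0^{\otimes N}})$, which is strictly smaller than $\norm{\sqrt{\Gamma_0^N}\sqrt{\gamma_0^{\otimes N}}}_1$ unless the two square roots commute. If you meant to include the polar factor, then its permutation invariance, and its completion to an exact purification, are exactly what must be proved. (Separately, $\gX$ must be large enough, e.g.\ infinite-dimensional, for $\gX^{\otimes N}$ to purify $\Gamma_0^N$ at all.)

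\textbf{How the paper fills it.} The missing idea is to choose the purifying factor inside the commutant $\mathcal M=\{B: U_\pi BU_\pi^*=B\ \forall\pi\}$. The paper takes $\Phi_0=|\sqrt{\gamma_0}\rangle\rangle$ and $\widetilde\Psi_{N,0}=|\sqrt{\Gamma_0^N}V\rangle\rangle$ with $V$ a \emph{unitary} in $\mathcal M$. Then $KK^*=\Gamma_0^N$ and joint symmetry hold automatically, and the overlap is $\tr(V^*A)$ with $A=\sqrt{\Gamma_0^N}\sqrt{\gamma_0^{\otimes N}}\in\mathcal M$. Averaging over $\mathfrak S_N$ shows that $\norm{A}_1$ is the supremum of $|\tr(B^*A)|$ over contractions $B\in\mathcal M$. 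The Russo--Dye theorem then gives a unitary $V\in\mathcal M$ with $|\tr(V^*A)|^2\ge\mathsf F(\Gamma_0^N,\gamma_0^{\otimes N})-1/N$. Hence $\widetilde\alpha_N(0)\le1-\mathsf F+1/N$, and the factor $2$ in the stated bound absorbs the resulting $2/N$. That is what the slack is for, not a symmetrization normalization.

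An exact maximizer also exists. The polar partial isometry of $A$ commutes with every $U_\pi$ by uniqueness of the polar decomposition. It can be completed equivariantly to a partial isometry whose range contains $\operatorname{supp}\Gamma_0^N$, using the polar part of $\sqrt{\Gamma_0^N}$ restricted to $\operatorname{supp}\Gamma_0^N\cap\ker A^*$, which maps into $\ker\gamma_0^{\otimes N}\subseteq\ker A$. But some such argument has to be supplied.

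Once this step is repaired, the rest of your proposal coincides with the paper's proof: the bound $\widehat n\le\1-\ket{\Phi_0^{\otimes N}}\bra{\Phi_0^{\otimes N}}$, the lifted dynamics, the purified Pickl estimate, $\1-P_k\le\sum_{j\le k}q_j^t$, monotonicity under $\tr_{\gX^{\otimes k}}$, and Fuchs--van de Graaf.
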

\begin{remark}
    As explained in the Introduction, the proof of this result is based on a purification argument and a purified Pickl estimate.
\end{remark}

We next introduce a purified formulation which allows us to apply a projection counting argument in an enlarged one-particle space.

Let $\gX$ be a separable auxiliary Hilbert space.  We introduce the lifted one- and $N$-particle spaces by
\begin{equation*}
    \widetilde\gH:=\gH\otimes\gX,
    \qquad
    \widetilde\gH_N:=\widetilde\gH^{\otimes N}
    \cong
    \gH_N\otimes\gX^{\otimes N}.
\end{equation*}
When $\gH=L^2(\Omega)$, we identify $\widetilde\gH$ with $L^2(\Omega;\gX)$.  For $\Phi\in\widetilde\gH$, define its spatial density by
\begin{equation}\label{eq: density of lifted Phi}
    \rho_\Phi(x):=\norm{\Phi(x)}_{\gX}^2.
\end{equation}
If $\Phi$ is normalized, the corresponding physical one-particle density operator is
\begin{equation*}
    \gamma_\Phi:=\tr_\gX\ket{\Phi}\bra{\Phi}.
\end{equation*}
Then $\rho_\Phi(x)=\gamma_\Phi(x,x)$ in the usual density sense, as verified in the proof of Corollary~\ref{cor:lifted-Hartree-von-Neumann} below.

We shall use the following terminology for purified initial data.  Let
\[
    (\Gamma_0^N,\gamma_0)\in \cD_s(\gH_N)\times\cD(\gH).
\]
We say that the pair $(\Gamma_0^N,\gamma_0)$ admits a symmetric
$\gX$-purification $(\widetilde\Psi_{N,0},\Phi_0)$, if there exist normalized vectors
\[
    \widetilde\Psi_{N,0}\in \widetilde\gH_N,
    \qquad
    \Phi_0\in \widetilde\gH,
\]
such that
\[
    \tr_\gX \ket{\Phi_0}\bra{\Phi_0}=\gamma_0,
\qquad
    \tr_{\gX^{\otimes N}}
    \ket{\widetilde\Psi_{N,0}}\bra{\widetilde\Psi_{N,0}}
    =
    \Gamma_0^N.
\]
Moreover, $\widetilde\Psi_{N,0}$ is required to be symmetric with respect to
simultaneous permutations of the physical and auxiliary variables.  Namely, for
every $\pi\in \mathfrak{S}_N$,
\begin{equation}\label{eq:symmetric-lifted-purification}
    \widetilde U_\pi \widetilde\Psi_{N,0}
    =
    \widetilde\Psi_{N,0},
\end{equation}
where $\widetilde U_\pi$ denotes the unitary representation of $\mathfrak{S}_N$ on
$\widetilde\gH_N=(\gH\otimes\gX)^{\otimes N}$ given by
\[
    \widetilde U_\pi
    \bigl(
        (u_1\otimes\xi_1)\otimes\cdots\otimes(u_N\otimes\xi_N)
    \bigr)
    =
    (u_{\pi^{-1}(1)}\otimes\xi_{\pi^{-1}(1)})
    \otimes\cdots\otimes
    (u_{\pi^{-1}(N)}\otimes\xi_{\pi^{-1}(N)}).
\]
We lift the physical Hamiltonian to $\widetilde\gH_N$ by setting
\begin{equation}\label{eq: lifted HN main}
    \widetilde H_N
    :=
    H_N\otimes\1_{\gX^{\otimes N}}
    =
    \left(\sum_{j=1}^N h_j+
    \frac1{N-1}\sum_{1\leq i<j\leq N}V(x_i-x_j)\right)
    \otimes\1_{\gX^{\otimes N}},
\end{equation}
where both $h_j$ and $V(x_i-x_j)$ act only on the physical variables.

Let $\Phi_t\in\widetilde\gH$ be the normalized solution of the lifted Hartree equation
\begin{equation}\label{eq: lifted Hartree main}
    \ii\partial_t\Phi_t=(h+V\ast\rho_{\Phi_t})\Phi_t,
    \qquad
    \Phi_t\big|_{t=0}=\Phi_0,
\end{equation}
and let $\widetilde\Psi_{N,t}$ be the solution of the lifted Schr\"odinger equation
\begin{equation}\label{eq: lifted Schro main}
    \ii\partial_t\widetilde\Psi_{N,t}
    =
    \widetilde H_N\widetilde\Psi_{N,t},
    \qquad
    \widetilde\Psi_{N,t}\big|_{t=0}=\widetilde\Psi_{N,0}.
\end{equation}

For the lifted Hartree state $\Phi_t$, define
\begin{equation}\label{eq: p q main}
    p_j^t:=\ket{\Phi_t}\bra{\Phi_t}_j,
    \qquad
    q_j^t:=\1-p_j^t,
\end{equation}
and use the weight $n(k)=k/N$, namely
\begin{equation}\label{eq: n hat main}
    \widehat n_t:=\frac1N\sum_{j=1}^Nq_j^t.
\end{equation}
The purified Pickl functional is
\begin{equation}\label{eq: alpha tilde main}
    \widetilde\alpha_N(t)
    :=
    \left\langle\widetilde\Psi_{N,t},\widehat n_t\widetilde\Psi_{N,t}\right\rangle.
\end{equation}

\begin{theorem}[Purified Pickl estimate]\label{thm:purified-pickl}
Let $\gX$ be a separable auxiliary Hilbert space, we assume that $(\Gamma_0^N,\gamma_0)$ admits a symmetric
$\gX$-purification $(\widetilde\Psi_{N,0},\Phi_0)$. Under Assumptions \ref{ass:analytic-framework} and \ref{ass: projected square bound}, we have for all $N\geq2$ and all $t\in[0,T]$,
    \begin{equation}\label{eq: alpha gronwall bound main}
    \widetilde\alpha_N(t)
    \leq
    \exp\left(8\int_0^t\Lambda_s\dd s\right)
    \left(\widetilde\alpha_N(0)+\frac1N\right).
\end{equation}
Consequently, if $\widetilde\alpha_N(0)\to0$, then
\begin{equation*}
    \sup_{t\in[0,T]}\widetilde\alpha_N(t)\to0.
\end{equation*}
\end{theorem}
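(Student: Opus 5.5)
We prove Theorem~\ref{thm:purified-pickl} by the standard Gronwall argument for Pickl's counting functional, carried out verbatim in the lifted space $\widetilde\gH=L^2(\Omega;\gX)$. The essential observation is that the lifted mean-field generator $h+V\ast\rho_{\Phi_t}$ acts only on the physical variable. Moreover, $\rho_{\Phi_t}=\rho_{\gamma_t}$ by the mild well-posedness in Assumption~\ref{ass:analytic-framework}, since $\tr_\gX\ket{\Phi_t}\bra{\Phi_t}$ solves \eqref{eq: physical mixed Hartree main}. Hence the quantity controlling the singular terms is exactly $\Lambda_t$ from \eqref{eq: Lambda main}.

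\textbf{Step 1: the derivative identity.} Differentiating $\widetilde\alpha_N(t)$ along \eqref{eq: lifted Schro main} and \eqref{eq: lifted Hartree main} gives
\[
\frac{\dd}{\dd t}\widetilde\alpha_N(t)=\ii\big\langle\widetilde\Psi_{N,t},[\widetilde H_N-\widetilde H_N^{\Phi_t},\widehat n_t]\widetilde\Psi_{N,t}\big\rangle,
\qquad \widetilde H_N^{\Phi_t}=\sum_j (h+V\ast\rho_{\Phi_t})_j .
\]
To make this rigorous under form-admissibility, we first differentiate in the weak sense and then approximate. Using the symmetry \eqref{eq:symmetric-lifted-purification}, which is preserved because $\widetilde H_N$ commutes with $\widetilde U_\pi$, this reduces to the two-body expression
\[
\frac{\dd}{\dd t}\widetilde\alpha_N=\frac{\ii N}{2}\big\langle\widetilde\Psi,[Z_{12},q_1+q_2]\widetilde\Psi\big\rangle\cdot\frac1N,
\qquad Z_{12}=\frac{N}{N-1}V(x_1-x_2)-(V\ast\rho_t)(x_1)-(V\ast\rho_t)(x_2).
\]
Equivalently, it is $\ii\,\langle\widetilde\Psi,[W_{12},q_1]\widetilde\Psi\rangle$ with $W_{12}=V(x_1-x_2)-(V\ast\rho_t)(x_1)$, up to symmetrization and an $O(1/N)$ correction.

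\textbf{Step 2: decomposition.} Insert $1=(p_1+q_1)(p_2+q_2)$ on both sides of the commutator. The $pp\to pp$ and $qq\to qq$ terms vanish. The $pp\leftrightarrow pq$ terms cancel exactly, because
\[
p_1p_2\,V(x_1-x_2)\,p_2 \;=\; p_1\,\big(V\ast\rho_t\big)(x_1)\,p_2 .
\]
This holds since $\langle\Phi,V(x_1-\cdot)\Phi\rangle_2=V\ast\rho_\Phi(x_1)$, where the $\gX$-inner product produces $\rho_\Phi$; this is the only place the purification structure enters, and it works because $V$ is $\gX$-blind. Up to the $\frac{1}{N-1}$ versus mean-field mismatch, which contributes $O(1/N)$, two terms remain:

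(a) the $pp\leftrightarrow qq$ term $\langle\widetilde\Psi,q_1q_2W_{12}p_1p_2\widetilde\Psi\rangle$;

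(b) the $pq\leftrightarrow qq$ term $\langle\widetilde\Psi,q_1q_2W_{12}p_1q_2\widetilde\Psi\rangle$.

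For (b), Cauchy--Schwarz and $\norm{W_{12}p_1}^2_{\mathrm{op}}\le\Lambda_t^2$ give $|(b)|\le\Lambda_t\norm{q_1\widetilde\Psi}^2=\Lambda_t\widetilde\alpha_N$. Here the operator bound is read off from \eqref{eq: Lambda main}: $p_1 W_{12}^2p_1=p_1\otimes f(x_2)$ with $f(y)=\int\rho|V(x-y)-V\ast\rho(x)|^2\dd x\le\Lambda_t^2$.

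\textbf{Step 3: the main obstacle, term (a).} Term (a) is the delicate one, since a naive bound only gives $\Lambda_t\sqrt{\widetilde\alpha_N}$. We use Pickl's symmetrization trick. Write
\[
\langle q_1q_2\widetilde\Psi, W_{12}p_1p_2\widetilde\Psi\rangle=\frac{1}{N-1}\sum_{j\ge2}\langle q_1q_j\widetilde\Psi,W_{1j}p_1p_j\widetilde\Psi\rangle,
\]
then move $q_1$ across and apply Cauchy--Schwarz. The cross terms in $\norm{\sum_j q_jW_{1j}p_1p_j\widetilde\Psi}^2$ have $p$'s on both sides, which bounds them by $\Lambda_t^2\norm{q\widetilde\Psi}^2$. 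The diagonal terms number $N$ and are each bounded by $\Lambda_t^2$, so after dividing by $(N-1)^2$ they contribute $O(\Lambda_t^2/N)$. This yields
\[
|(a)|\le C\Lambda_t\Big(\widetilde\alpha_N+\tfrac1N\Big).
\]

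Collecting terms gives
\[
\frac{\dd}{\dd t}\widetilde\alpha_N\le 8\Lambda_t\Big(\widetilde\alpha_N+\tfrac1N\Big),
\]
where we track constants to get $8$ and absorb the $O(1/N)$ mismatch term using $\Lambda_t$ bounds on $\norm{(V\ast\rho)}$-type quantities, or by the same $W$-structure. Gronwall on $\widetilde\alpha_N+1/N$ then gives \eqref{eq: alpha gronwall bound main}, and the convergence statement follows from \eqref{eq: Lambda integrability main}.
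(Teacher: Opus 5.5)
Your argument is correct in substance, but Step~1 contains a bookkeeping error, which you then paper over with an absorption step that is unnecessary and only sketched. Since $\sum_{i<j}(W_i+W_j)=(N-1)\sum_jW_j$ with $W=V\ast\rho_t$, one has exactly
\[
\widetilde H_N-\widetilde H_N^{\Phi_t}=\frac1{N-1}\sum_{1\le i<j\le N}\bigl(V_{ij}-W_i-W_j\bigr).
\]
By permutation symmetry, $\frac{\dd}{\dd t}\widetilde\alpha_N=\frac{\ii}{2}\langle\widetilde\Psi,[Z_{12},q_1+q_2]\widetilde\Psi\rangle$ with $Z_{12}=V_{12}-W_1-W_2$. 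There is no factor $\frac{N}{N-1}$, and this equals $\ii\langle\widetilde\Psi,[D_t,q_1]\widetilde\Psi\rangle$ with no remainder. The normalization $\frac1{N-1}$ is chosen so that there is no mean-field mismatch. With the correct $Z_{12}$ the $pp\leftrightarrow pq$ blocks cancel exactly via $p_2V_{12}p_2=W_1p_2$, so the ``$O(1/N)$ correction'' should be deleted rather than absorbed.

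With this fixed, your proof works and takes a different route from the paper in two places.

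First, the paper keeps the asymmetric form and bounds the derivative by $2\abs{\langle p_1D_tq_1\rangle}$, so it must also estimate the exchange term $T_{10}=\langle p_1q_2D_tq_1p_2\rangle$. In your symmetric form, $[Z_{12},q_1+q_2]$ vanishes between any two projections with the same number of $q$'s among particles $1,2$. This is what removes $p_1q_2Z_{12}q_1p_2$, not just the $pp\to pp$ and $qq\to qq$ blocks, so you should state it explicitly. Consistently, $T_{10}$ is real by the $1\leftrightarrow2$ symmetry and does not contribute to the derivative.

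Second, for the $pp\to qq$ term the paper inserts $\widehat n^{1/2}\widehat n^{-1/2}$ via the Moore--Penrose inverse and uses $\sum_jq_j=N\widehat n$. You instead average over $j\ge2$ and apply Cauchy--Schwarz to $\norm{\sum_{j\ge2}q_jW_{1j}p_1p_j\widetilde\Psi}^2$. The cross terms become $\langle W_{1j}p_1p_jq_k\widetilde\Psi,W_{1k}p_1p_kq_j\widetilde\Psi\rangle\le\Lambda_t^2\widetilde\alpha_N$, and there are $N-1$ diagonal terms, not $N$. This gives $\abs{(a)}\le\Lambda_t\sqrt{\widetilde\alpha_N(\widetilde\alpha_N+\frac1{N-1})}$. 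Overall you get $\abs{\frac{\dd}{\dd t}\widetilde\alpha_N}\le4\Lambda_t(\widetilde\alpha_N+\frac1N)$, i.e.\ constant $4$ instead of $8$, with no functional calculus of $\widehat n$.

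The trade-off is generality. The paper's weighted insertion is the mechanism that carries over to general counting weights $n(k)$, as suggested in the remark after the theorem. Your Cauchy--Schwarz trick is tied to the linear weight through $\norm{q_1\widetilde\Psi}^2=\widetilde\alpha_N$.
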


\begin{remark}
If $\Gamma_0^N=\gamma_0^{\otimes N}$ and $\Phi_0$ is any purification of $\gamma_0$, then choosing $\widetilde\Psi_{N,0}=\Phi_0^{\otimes N}$ gives $\widetilde\alpha_N(0)=0$. The purification mechanism is independent of the choice of counting weight, and
should therefore be compatible with the modified number operators used in
Pickl-type derivations of the Gross--Pitaevskii equation.  The additional
short-scale correlation estimates required in that scaling are separate
analytic issues and are not addressed here.
\end{remark}

\section{Purification and Fidelity Preliminaries}\label{sec: preliminaries}

We first record the elementary covariance property of the partial trace under a unitary acting only on the physical factor.

\begin{lemma}[Partial trace under a trivial auxiliary lift]\label{lem:partial-trace-lift}
Let $\mathcal Y$ and $\gX$ be separable Hilbert spaces and set
\[
    \widetilde{\mathcal Y}:=\mathcal Y\otimes\gX.
\]
Let $A(t)$ be a possibly time-dependent self-adjoint operator on $\mathcal Y$
which generates a unitary propagator $\cU(t,s)$ on $\mathcal Y$.  Define the
lifted propagator on $\widetilde{\mathcal Y}$ by
\[
    \widetilde \cU(t,s):=\cU(t,s)\otimes \1_\gX .
\]
Let $\widetilde\Gamma_0\in\cL^{1}(\widetilde{\mathcal Y})$ be trace class and set
\[
    \widetilde\Gamma_t
    :=
    \widetilde\cU(t,0)\widetilde\Gamma_0\widetilde\cU(t,0)^*,
    \qquad
    \Gamma_t
    :=
    \tr_\gX \widetilde\Gamma_t .
\]
Then
\[
    \Gamma_t
    =
    \cU(t,0)\Gamma_0\cU(t,0)^*,
    \qquad
    \Gamma_0:=\tr_\gX\widetilde\Gamma_0 .
\]
In particular, $\Gamma_t$ is the mild solution of the physical von Neumann
equation
\[
    \ii\partial_t\Gamma_t=\comm{A(t)}{\Gamma_t}.
\]
If, in addition, the maps are differentiable in trace norm and the commutators
are trace class, then the same identity holds in the differential sense.

The same conclusion applies to pure lifted states.  Namely, if
\[
    \widetilde\Psi_t
    =
    \widetilde\cU(t,0)\widetilde\Psi_0,
    \qquad
    \widetilde\Gamma_t
    =
    \ket{\widetilde\Psi_t}\bra{\widetilde\Psi_t},
\]
then
\[
    \Gamma_t
    :=
    \tr_\gX\ket{\widetilde\Psi_t}\bra{\widetilde\Psi_t}
\]
satisfies
\[
    \Gamma_t
    =
    \cU(t,0)
    \left(
        \tr_\gX\ket{\widetilde\Psi_0}\bra{\widetilde\Psi_0}
    \right)
    \cU(t,0)^* .
\]
\end{lemma}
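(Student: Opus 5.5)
The plan is to reduce everything to one algebraic identity: the partial trace over $\gX$ is a bimodule map over operators of the form $B\otimes\1_\gX$. Precisely, for bounded $B,C$ on $\mathcal Y$ and trace-class $\widetilde\Gamma$ on $\widetilde{\mathcal Y}$,
\[
    \tr_\gX\bigl((B\otimes\1_\gX)\widetilde\Gamma(C\otimes\1_\gX)\bigr)
    =
    B\bigl(\tr_\gX\widetilde\Gamma\bigr)C .
\]
I will prove this from the defining duality of the partial trace. Recall that $\tr_\gX\widetilde\Gamma$ is the unique trace-class operator on $\mathcal Y$ with
\[
    \tr_{\mathcal Y}\bigl(K\,\tr_\gX\widetilde\Gamma\bigr)=\tr_{\widetilde{\mathcal Y}}\bigl((K\otimes\1_\gX)\widetilde\Gamma\bigr)
\]
for all bounded $K$ on $\mathcal Y$. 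For any such $K$ one computes
\[
    \tr_{\widetilde{\mathcal Y}}\bigl((K\otimes\1)(B\otimes\1)\widetilde\Gamma(C\otimes\1)\bigr)
    =\tr_{\widetilde{\mathcal Y}}\bigl((CKB\otimes\1)\widetilde\Gamma\bigr)
    =\tr_{\mathcal Y}\bigl(CKB\,\tr_\gX\widetilde\Gamma\bigr)
    =\tr_{\mathcal Y}\bigl(K\,B(\tr_\gX\widetilde\Gamma)C\bigr).
\]
The first and last equalities use cyclicity of the trace, which is valid because one factor is trace class and the others are bounded. Uniqueness in the duality then gives the identity. As a fallback one could check it on an orthonormal basis $\{e_\alpha\}$ of $\gX$, using $\tr_\gX\widetilde\Gamma=\sum_\alpha(\1\otimes\bra{e_\alpha})\widetilde\Gamma(\1\otimes\ket{e_\alpha})$ and the fact that $\cU\otimes\1$ commutes with $\1\otimes\ket{e_\alpha}$; convergence of the sum in trace norm is standard.

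With the identity in hand, I take $B=\cU(t,0)$ and $C=\cU(t,0)^*$. Since $\widetilde\cU(t,0)^*=\cU(t,0)^*\otimes\1_\gX$, this gives $\Gamma_t=\cU(t,0)\Gamma_0\cU(t,0)^*$ immediately. By definition, this conjugated form is precisely the mild solution of $\ii\partial_t\Gamma_t=[A(t),\Gamma_t]$ generated by $\cU$.

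For the differential statement, I will assume trace-norm differentiability of $\widetilde\Gamma_t$ with $\ii\partial_t\widetilde\Gamma_t=[A(t)\otimes\1,\widetilde\Gamma_t]$ and trace-class commutators. Because $\tr_\gX$ is a bounded linear map $\cL^1(\widetilde{\mathcal Y})\to\cL^1(\mathcal Y)$ of norm at most $1$, it commutes with $\partial_t$. The only step needing care is moving the unbounded $A(t)$ through the partial trace. I would handle it weakly: pair against finite-rank $K$ whose range lies in the domain of $A(t)$, use the duality above, and conclude by density. Alternatively one can simply differentiate $\cU(t,0)\Gamma_0\cU(t,0)^*$ directly under the stated hypotheses. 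This domain bookkeeping is the only real obstacle; the algebra itself is routine.

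The pure-state case is the special case $\widetilde\Gamma_0=\ket{\widetilde\Psi_0}\bra{\widetilde\Psi_0}$. There $\widetilde\Gamma_t=\widetilde\cU\ket{\widetilde\Psi_0}\bra{\widetilde\Psi_0}\widetilde\cU^*=\ket{\widetilde\Psi_t}\bra{\widetilde\Psi_t}$, and the claim follows from the mixed case.
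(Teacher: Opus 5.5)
Your proposal is correct and follows essentially the same route as the paper. The paper proves $\tr_\gX\bigl((U\otimes\1_\gX)T(U^*\otimes\1_\gX)\bigr)=U(\tr_\gX T)U^*$ by pairing with an arbitrary bounded operator on $\mathcal Y$, using cyclicity and the defining duality of the partial trace, then specializes to $U=\cU(t,0)$ and to $\widetilde\Gamma_0=\ket{\widetilde\Psi_0}\bra{\widetilde\Psi_0}$; your bimodule version with independent $B,C$ and your remarks on the differential statement (which the paper does not address beyond identifying the conjugated form with the mild solution) are only minor elaborations of the same argument.
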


\begin{proof}
The only point to prove is that partial trace is covariant under a unitary
acting on the physical factor.  We first show that for every trace class
operator $T\in\cL^1(\mathcal Y\otimes\gX)$ and every unitary $U$ on $\mathcal Y$,
\begin{equation}\label{eq:partial-trace-covariance}
    \tr_\gX\left((U\otimes\1_\gX)T(U^*\otimes\1_\gX)\right)
    =
    U(\tr_\gX T)U^* .
\end{equation}
Let $B\in\cL(\mathcal Y)$ be arbitrary.  By the defining property of the
partial trace,
\begin{align*}
&\tr_{\mathcal Y}\left[
    B\,
    \tr_\gX\left((U\otimes\1_\gX)T(U^*\otimes\1_\gX)\right)
\right] \\
&\quad =
\tr_{\mathcal Y\otimes\gX}
\left[
    (B\otimes\1_\gX)
    (U\otimes\1_\gX)
    T
    (U^*\otimes\1_\gX)
\right].
\end{align*}
Using cyclicity of the trace, the right-hand side becomes
\begin{align*}
\tr_{\mathcal Y\otimes\gX}
\left[
    (U^*BU\otimes\1_\gX)T
\right]
=
\tr_{\mathcal Y}
\left[
    U^*BU\,\tr_\gX T
\right]
=
\tr_{\mathcal Y}
\left[
    B\,U(\tr_\gX T)U^*
\right].
\end{align*}
Since this holds for all $B\in\cL(\mathcal Y)$, \eqref{eq:partial-trace-covariance}
follows.

We now apply \eqref{eq:partial-trace-covariance} with
\[
    U=\cU(t,0),
    \qquad
    T=\widetilde\Gamma_0 .
\]
Since
\[
    \widetilde\Gamma_t
    =
    (\cU(t,0)\otimes\1_\gX)
    \widetilde\Gamma_0
    (\cU(t,0)^*\otimes\1_\gX),
\]
we get
\[
    \Gamma_t
    =
    \tr_\gX\widetilde\Gamma_t
    =
    \cU(t,0)
    (\tr_\gX\widetilde\Gamma_0)
    \cU(t,0)^*
    =
    \cU(t,0)\Gamma_0\cU(t,0)^* .
\]
This is precisely the mild solution of
\[
    \ii\partial_t\Gamma_t=\comm{A(t)}{\Gamma_t}.
\]
The statement for pure lifted states follows by taking
$
    \widetilde\Gamma_t
    =
    \ket{\widetilde\Psi_t}\bra{\widetilde\Psi_t}.
$
\end{proof}

\begin{corollary}[Application to the lifted Schrödinger equation]
\label{cor:lifted-von-Neumann}
Note that
\[
    \widetilde\gH_N
    \cong
    \gH_N\otimes\gX^{\otimes N}
\]
and the lifted Hamiltonian has the form
\[
    \widetilde H_N
    =
    H_N\otimes\1_{\gX^{\otimes N}},
\]
where $H_N$ acts on $\gH_N$.  Let
\[
    \widetilde\Gamma_t^N
    =
    \widetilde\cU_N(t,0)\widetilde\Gamma_0^N\widetilde\cU_N(t,0)^*
\]
be the lifted evolution, with
\[
    \widetilde\cU_N(t,s)
    =
    \cU_N(t,s)\otimes\1_{\gX^{\otimes N}} .
\]
Then $\tr_{\gX^{\otimes N}}\widetilde{\Gamma}_t^N$ solves the physical von Neumann equation \eqref{eq: physical von Neumann main} in the mild sense.
\end{corollary}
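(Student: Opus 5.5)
This is a direct application of Lemma~\ref{lem:partial-trace-lift} with $\mathcal Y=\gH_N$, auxiliary space $\gX^{\otimes N}$, $A(t)\equiv H_N$ and $\cU(t,s)=\cU_N(t,s)=e^{-\ii(t-s)H_N}$. The plan has three steps, carried out in order.

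First, I fix the tensor identification. The unitary rearrangement
\[
    \widetilde\gH_N=(\gH\otimes\gX)^{\otimes N}\cong\gH_N\otimes\gX^{\otimes N}
\]
sends $(u_1\otimes\xi_1)\otimes\cdots\otimes(u_N\otimes\xi_N)$ to $(u_1\otimes\cdots\otimes u_N)\otimes(\xi_1\otimes\cdots\otimes\xi_N)$. I will check on product vectors that under this map $h_j$ and $V(x_i-x_j)$, which act only on physical variables, become operators of the form $(\cdot)\otimes\1_{\gX^{\otimes N}}$. Hence $\widetilde H_N$ from \eqref{eq: lifted HN main} is identified with $H_N\otimes\1_{\gX^{\otimes N}}$, as stated.

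Second, I identify the lifted propagator. By Assumption~\ref{ass:analytic-framework}, $H_N$ is self-adjoint, so Stone's theorem gives the unitary group $\cU_N(t,s)=e^{-\ii(t-s)H_N}$. Using the spectral theorem, or functional calculus for operators $B\otimes\1$, I obtain
\[
    e^{-\ii(t-s)(H_N\otimes\1)}=e^{-\ii(t-s)H_N}\otimes\1 .
\]
Indeed, $H_N\otimes\1$ is self-adjoint with spectral measure $E_{H_N}(\cdot)\otimes\1$. Therefore $\widetilde\cU_N(t,s)=\cU_N(t,s)\otimes\1_{\gX^{\otimes N}}$ is exactly the group generated by $\widetilde H_N$, and $\widetilde\Gamma_t^N$ is the mild solution of the lifted von Neumann equation.

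Third, I apply Lemma~\ref{lem:partial-trace-lift} with $\widetilde\Gamma_0=\widetilde\Gamma_0^N$, which is trace class. This gives
\[
    \tr_{\gX^{\otimes N}}\widetilde\Gamma_t^N=\cU_N(t,0)\bigl(\tr_{\gX^{\otimes N}}\widetilde\Gamma_0^N\bigr)\cU_N(t,0)^*,
\]
which is by definition the mild solution of \eqref{eq: physical von Neumann main} with initial datum $\tr_{\gX^{\otimes N}}\widetilde\Gamma_0^N$. For the pure case $\widetilde\Gamma_0^N=\ket{\widetilde\Psi_{N,0}}\bra{\widetilde\Psi_{N,0}}$ coming from a symmetric purification, this initial datum is $\Gamma_0^N$. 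So $\tr_{\gX^{\otimes N}}\ket{\widetilde\Psi_{N,t}}\bra{\widetilde\Psi_{N,t}}=\Gamma_t^N$.

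The only step needing care is the second, namely that the exponential of $H_N\otimes\1$ factorizes when $H_N$ is unbounded and defined through a closed form. I would handle it through the spectral measure rather than power series. Alternatively, both sides are strongly continuous unitary groups that agree on the dense set of product vectors with $\Psi\in D(H_N)$ and have the same generator there, so they coincide.
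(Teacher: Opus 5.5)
Your proposal is correct and follows the paper's proof, which is simply Lemma~\ref{lem:partial-trace-lift} applied with $\mathcal Y=\gH_N$, auxiliary space $\gX^{\otimes N}$, and $A(t)=H_N$. Your extra check via the spectral measure that $e^{-\ii(t-s)(H_N\otimes\1)}=e^{-\ii(t-s)H_N}\otimes\1$ is not needed for the corollary, since the statement already takes $\widetilde\cU_N=\cU_N\otimes\1_{\gX^{\otimes N}}$, but it is a harmless supplement that links this propagator to the lifted Schr\"odinger equation used later.
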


\begin{proof}
This is Lemma \ref{lem:partial-trace-lift} with
\[
    \mathcal Y=\gH_N,
    \qquad
    \gX \text{ replaced by } \gX^{\otimes N},
    \qquad
    A(t)=H_N.
\]
\end{proof}

\begin{corollary}[Application to the lifted Hartree equation]
\label{cor:lifted-Hartree-von-Neumann}
Let $\gH=L^2(\Omega)$ and $\widetilde\gH=\gH\otimes\gX$.  Let
\[
    \Phi_t\in \widetilde\gH\simeq L^2(\Omega;\gX)
\]
solve
\[
    \ii\partial_t\Phi_t
    =
    \big(h+V\ast\rho_{\Phi_t}\big)\Phi_t,
\]
where $h$ and $W_t$ act only on the physical variable. Then $\widetilde{\gamma}_t:=\tr_{\gX}\ket{\Phi_t}\bra{\Phi_t}$ solves the physical Hartree equation \eqref{eq: physical mixed Hartree main} in the mild sense.
\end{corollary}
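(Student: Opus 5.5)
The plan is to reduce Corollary~\ref{cor:lifted-Hartree-von-Neumann} to Lemma~\ref{lem:partial-trace-lift}. We treat the lifted Hartree equation as a \emph{linear} time-dependent Schr\"odinger equation once the solution $\Phi_t$ is fixed. First we identify the spatial density. Set $\widetilde\gamma_t:=\tr_\gX\ket{\Phi_t}\bra{\Phi_t}$. Expanding $\Phi_t=\sum_m \phi_m^t\otimes e_m$ in an orthonormal basis $(e_m)$ of $\gX$ gives
\[
    \widetilde\gamma_t=\sum_m\ket{\phi_m^t}\bra{\phi_m^t},
    \qquad
    \widetilde\gamma_t(x,y)=\langle\Phi_t(y),\Phi_t(x)\rangle_\gX .
\]
Hence $\rho_{\widetilde\gamma_t}(x)=\sum_m|\phi_m^t(x)|^2=\norm{\Phi_t(x)}_\gX^2=\rho_{\Phi_t}(x)$ almost everywhere. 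This equality is to be understood in the usual sense for densities of trace-class operators, namely $\rho_{\widetilde\gamma_t}=\sum_m|\phi_m^t|^2\in L^1$. Monotone convergence justifies the interchange of sum and norm.

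Next we freeze the nonlinearity. Define the time-dependent one-particle operator $A(t):=h+W_t$ on $\gH$, where $W_t:=V\ast\rho_{\Phi_t}$ is a multiplication operator in the physical variable. Assumption~\ref{ass:analytic-framework} provides well-posedness, so $A(t)$ generates a unitary propagator $\cU(t,s)$ on $\gH$. The lifted operator $A(t)\otimes\1_\gX$ acts on $L^2(\Omega;\gX)$ by applying $h$ and multiplication by $W_t(x)$ fibrewise, so it generates $\cU(t,s)\otimes\1_\gX$. Since $\Phi_t$ is a mild solution of the lifted Hartree equation, it solves the linear equation $\ii\partial_t\Phi_t=(A(t)\otimes\1_\gX)\Phi_t$ with this given $W_t$. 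Uniqueness for the linear problem therefore gives $\Phi_t=(\cU(t,0)\otimes\1_\gX)\Phi_0$.

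We then apply the pure-state part of Lemma~\ref{lem:partial-trace-lift} with $\mathcal Y=\gH$. It yields $\widetilde\gamma_t=\cU(t,0)\widetilde\gamma_0\cU(t,0)^*$, so $\widetilde\gamma_t$ is the mild solution of $\ii\partial_t\widetilde\gamma_t=[h+V\ast\rho_{\Phi_t},\widetilde\gamma_t]$. By the first step, $\rho_{\Phi_t}=\rho_{\widetilde\gamma_t}$, so the generator is exactly $h+V\ast\rho_{\widetilde\gamma_t}$. Thus $\widetilde\gamma_t$ is a mild solution of \eqref{eq: physical mixed Hartree main} with initial datum $\tr_\gX\ket{\Phi_0}\bra{\Phi_0}$.

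The main obstacle is the analytic justification of the middle step, for two reasons. We must check that the lifted nonlinear flow, fed its own potential, coincides with the linear propagator built on $\gH$ and tensored with $\1_\gX$. We must also check that $W_t$ is regular enough in $t$ for $\cU$ to exist. We would handle this by working within the mild (Duhamel) formulation of Assumption~\ref{ass:analytic-framework}, where both objects satisfy the same Duhamel identity, and by invoking uniqueness of the linear evolution.
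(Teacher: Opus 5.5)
Your proposal is correct and follows essentially the same route as the paper. You apply the pure-state part of Lemma~\ref{lem:partial-trace-lift} with $\mathcal Y=\gH$ and $A(t)=h+V\ast\rho_{\Phi_t}$, and you identify $\rho_{\widetilde\gamma_t}=\norm{\Phi_t(\cdot)}_\gX^2$ through the basis expansion $\widetilde\gamma_t=\sum_m\ket{\phi_m^t}\bra{\phi_m^t}$. Your explicit step of freezing the nonlinearity and invoking linear uniqueness to get $\Phi_t=(\cU(t,0)\otimes\1_\gX)\Phi_0$ is left implicit in the paper, so it is a useful clarification rather than a different argument.
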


\begin{proof}
Apply Lemma \ref{lem:partial-trace-lift} with
\[
    \mathcal Y=\gH,
    \qquad
    A(t)=h+V\ast\rho_{\Phi_t},
    \qquad
    \widetilde\Gamma_t=\ket{\Phi_t}\bra{\Phi_t}.
\]
It remains only to identify the density.  Let $(e_m)_{m\geq1}$ be an
orthonormal basis of $\gX$ and define scalar functions
\[
    \phi_m(t,x):=\left\langle e_m,\Phi_t(x)\right\rangle_{\gX}.
\]
Then the partial trace can be written as
\[
    \widetilde{\gamma}_t
    =
    \sum_{m\geq1}\ket{\phi_m(t)}\bra{\phi_m(t)}
\]
with convergence in trace norm.  Therefore, the integral kernel of $\widetilde{\gamma}_t$
is
\[
    \widetilde{\gamma}_t(x,y)
    =
    \sum_{m\geq1}\phi_m(t,x)\overline{\phi_m(t,y)}
    =
    \left\langle \Phi_t(y),\Phi_t(x)\right\rangle_{\gX}.
\]
In particular,
\[
    \widetilde{\gamma}_t(x,x)
    =
    \sum_{m\geq1}|\phi_m(t,x)|^2
    =
    \norm{\Phi_t(x)}_{\gX}^2
    =
    \rho_t(x).
\]
Hence, $W_t=V*\rho_{\widetilde{\gamma}_t}$, and the equation becomes
\[
    \ii\partial_t\widetilde{\gamma}_t
    =
    \comm{h+V*\rho_{\widetilde{\gamma}_t}}{\widetilde{\gamma}_t}.
\]
\end{proof}

The following two lemmas are standard and will be used in the proof of Theorem~\ref{tm:fidelity-consequence}. Recall that the squared Uhlmann fidelity is defined in \eqref{eq: Uhlmann fidelity main}.

\begin{lemma}[Fuchs--van de Graaf inequality]
\label{lem:fuchs-van-de-graaf}
Let $\mathcal Y$ be a separable Hilbert space and let
$\rho,\gamma\in\cD(\mathcal Y)$. Then
\begin{equation}\label{eq:fuchs-van-de-graaf-squared}
    \frac12\norm{\rho-\gamma}_1
    \leq
    \sqrt{1-\mathsf F(\rho,\gamma)}.
\end{equation}
\end{lemma}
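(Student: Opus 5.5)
The plan is to reduce the inequality to the finite-dimensional case, or more directly to the pure-state case via purification, where everything is an explicit $2\times2$ computation. Concretely, I would use Uhlmann's theorem: for $\rho,\gamma\in\cD(\mathcal Y)$ there exist purifications $\psi,\phi\in\mathcal Y\otimes\mathcal Y$ (the auxiliary space may be taken to be a copy of the separable space $\mathcal Y$) with $\tr_2\ket\psi\bra\psi=\rho$, $\tr_2\ket\phi\bra\phi=\gamma$ and $\abs{\inprod{\psi}{\phi}}^2=\mathsf F(\rho,\gamma)$. Explicitly, take $\psi=\sqrt\rho$ and $\phi=\sqrt\gamma\,W$, viewed as Hilbert--Schmidt operators (vectors in $\mathcal Y\otimes\overline{\mathcal Y}$), where $W$ is the partial isometry from the polar decomposition $\sqrt\gamma\sqrt\rho=\abs{\sqrt\gamma\sqrt\rho}\,$-type factorization, chosen so that $\tr(W^*\sqrt\gamma\sqrt\rho)=\norm{\sqrt\rho\sqrt\gamma}_1$. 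Since $\psi\psi^*=\rho$ and $\phi\phi^*=\sqrt\gamma WW^*\sqrt\gamma=\gamma$ (after arranging $WW^*$ to act as the identity on the range of $\gamma$, extending the partial isometry if needed), these are purifications, and $\inprod{\psi}{\phi}_{HS}$ has modulus $\norm{\sqrt\rho\sqrt\gamma}_1$.

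Next, monotonicity of the trace norm under partial trace (a contraction, since $\norm{\tr_2 X}_1=\sup_{\norm B\leq1}\abs{\tr((B\otimes\1)X)}\leq\norm X_1$) gives
\[
    \norm{\rho-\gamma}_1\leq\norm{\ket\psi\bra\psi-\ket\phi\bra\phi}_1 .
\]
For two unit vectors the operator $\ket\psi\bra\psi-\ket\phi\bra\phi$ has rank at most two, is traceless, and restricted to $\mathrm{span}\{\psi,\phi\}$ has eigenvalues $\pm\sqrt{1-\abs{\inprod\psi\phi}^2}$ (compute via the determinant in an orthonormal basis of the span). Hence its trace norm is $2\sqrt{1-\abs{\inprod\psi\phi}^2}=2\sqrt{1-\mathsf F(\rho,\gamma)}$, which gives \eqref{eq:fuchs-van-de-graaf-squared}.

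The main obstacle is justifying the Uhlmann purification step carefully in infinite dimensions: the polar decomposition of the trace-class operator $\sqrt\gamma\sqrt\rho$ yields only a partial isometry, and one must check that the resulting $\phi$ still satisfies $\phi\phi^*=\gamma$. I would handle this by working with $\phi=W^*$-twisted version on the right, $\phi=\sqrt\gamma\, U$ with $U$ unitary on the (possibly enlarged) auxiliary space: note that right multiplication by any unitary preserves $\phi\phi^*=\gamma$, and that $\sup_U\abs{\tr(\sqrt\rho\sqrt\gamma U)}=\norm{\sqrt\rho\sqrt\gamma}_1$. Attaining the supremum is unnecessary, since only the inequality $\abs{\inprod\psi\phi}^2\to\mathsf F$ along a maximizing sequence is needed, and the bound $\norm{\rho-\gamma}_1\leq2\sqrt{1-\abs{\inprod{\psi}{\phi_n}}^2}$ then passes to the limit. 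This sidesteps all kernel and range issues. Alternatively, one could approximate $\rho,\gamma$ by finite-rank density operators in trace norm and use continuity of $\mathsf F$, but the supremum argument seems cleaner.
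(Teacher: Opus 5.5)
Your proof is correct, but it takes a different route from the paper.

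\textbf{Your route.} You purify: with $\psi=\sqrt\rho$ and $\phi_U=\sqrt\gamma\,U$ in $\mathcal L^2(\mathcal Y)\cong\mathcal Y\otimes\overline{\mathcal Y}$, you combine two facts. The trace norm contracts under $\tr_{\overline{\mathcal Y}}$, and $\ket\psi\bra\psi-\ket{\phi_U}\bra{\phi_U}$ is an explicit rank-two operator. Together these give $\norm{\rho-\gamma}_1\le 2\sqrt{1-\abs{\tr(\sqrt\rho\sqrt\gamma\,U)}^2}$ for every unitary $U$, and you then optimize over $U$.

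\textbf{The paper's route.} It never leaves $\mathcal Y$. It takes the spectral projection $P$ onto the positive part of $\rho-\gamma$ and splits $\sqrt\rho\sqrt\gamma=\sqrt\rho P\sqrt\gamma+\sqrt\rho(\1-P)\sqrt\gamma$. H\"older's inequality then bounds $\norm{\sqrt\rho\sqrt\gamma}_1$ by the classical root fidelity of the two-outcome distributions $(\tr(P\rho),\tr((\1-P)\rho))$ and $(\tr(P\gamma),\tr((\1-P)\gamma))$. An elementary scalar inequality finishes the proof.

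\textbf{Comparison.} The paper's argument is fully self-contained: H\"older plus algebra, with no purification and no approximation step. It also yields the slightly stronger intermediate fact that the fidelity is dominated by the classical fidelity of the Helstrom measurement. Your argument is the textbook one. It fits the paper's purification philosophy, and it uses the same sup-over-unitaries device the paper employs in the proof of Theorem~\ref{tm:fidelity-consequence} to build $\widetilde\Psi_{N,0}$. It also makes visible that the bound is an equality for pure states.

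\textbf{One step to justify.} Your argument relies on $\sup_U\abs{\tr(\sqrt\rho\sqrt\gamma\,U)}=\norm{\sqrt\rho\sqrt\gamma}_1$, which you assert without proof. It is true, and you were right to abandon the polar-decomposition construction of your first paragraph, since that partial isometry need not extend to a unitary in infinite dimensions. Still, the identity deserves a line. For example, truncate the singular value decomposition $\sqrt\rho\sqrt\gamma=\sum_n s_n\ket{e_n}\bra{f_n}$ to its first $M$ terms. Choose a unitary with $Ue_n=f_n$ for $n\le M$, which exists because the finite partial isometry extends. This gives $\abs{\tr(\sqrt\rho\sqrt\gamma\,U)}\ge\sum_{n\le M}s_n-\sum_{n>M}s_n$, and you let $M\to\infty$. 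Alternatively, invoke Russo--Dye as the paper does.
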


\begin{proof}
Let $P$ be the spectral projection of $\rho-\gamma$ corresponding to the
positive spectrum.  Since $\rho-\gamma$ is self-adjoint, trace class, and has
trace zero, its positive and negative parts have the same trace.  Hence,
\begin{equation}\label{eq:positive-projection-trace-distance}
    \tr\big(P(\rho-\gamma)\big)
    =
    \frac12\norm{\rho-\gamma}_1.
\end{equation}
In particular,
\[
    \tr(P\rho)\geq \tr(P\gamma).
\]

We estimate the root fidelity directly.  Since $P+(\1-P)=\1$, the triangle
inequality gives
\begin{align}
    \norm{\sqrt\rho\sqrt\gamma}_1
    &=
    \norm{\sqrt\rho P\sqrt\gamma
    +
    \sqrt\rho(\1-P)\sqrt\gamma}_1
    \nn\\
    &\leq
    \norm{\sqrt\rho P\sqrt\gamma}_1
    +
    \norm{\sqrt\rho(\1-P)\sqrt\gamma}_1 .
\end{align}
By H\"older's inequality for Schatten norms,
\[
    \norm{\sqrt\rho P\sqrt\gamma}_1
    \leq
    \norm{\sqrt\rho P}_2\norm{P\sqrt\gamma}_2
    =
    \sqrt{\tr(P\rho)}\sqrt{\tr(P\gamma)}.
\]
Similarly,
\[
    \norm{\sqrt\rho(\1-P)\sqrt\gamma}_1
    \leq
    \sqrt{\tr((\1-P)\rho)}
    \sqrt{\tr((\1-P)\gamma)}.
\]
Therefore,
\begin{equation}\label{eq:root-fidelity-two-partition-bound}
    \norm{\sqrt\rho\sqrt\gamma}_1
    \leq
    \sqrt{\tr(P\rho)\tr(P\gamma)}
    +
    \sqrt{\tr((\1-P)\rho)\tr((\1-P)\gamma)}.
\end{equation}

We now use the elementary scalar inequality
\begin{align}
\left(
    \sqrt{\tr(P\rho)\tr(P\gamma)}
    +
    \sqrt{\tr((\1-P)\rho)\tr((\1-P)\gamma)}
\right)^2
\leq
1-\left(\tr(P(\rho-\gamma))\right)^2.
\label{eq:binary-fidelity-direct}
\end{align}
Indeed, after expanding both sides, \eqref{eq:binary-fidelity-direct} is
equivalent to
\[
    \left(
        \sqrt{\tr(P\rho)\tr((\1-P)\rho)}
        -
        \sqrt{\tr(P\gamma)\tr((\1-P)\gamma)}
    \right)^2
    \geq0.
\]
Combining \eqref{eq:root-fidelity-two-partition-bound} and
\eqref{eq:binary-fidelity-direct}, we obtain
\[
    \mathsf F(\rho,\gamma)
    =
    \norm{\sqrt\rho\sqrt\gamma}_1^2
    \leq
    1-\left(\tr(P(\rho-\gamma))\right)^2.
\]
Using \eqref{eq:positive-projection-trace-distance}, this becomes
\[
    \mathsf F(\rho,\gamma)
    \leq
    1-\frac14\norm{\rho-\gamma}_1^2.
\]
This proves \eqref{eq:fuchs-van-de-graaf-squared}.
\end{proof}

\begin{lemma}[Data processing inequality]
\label{lem:fidelity-data-processing}
Let $\mathcal Y_1$ and $\mathcal Y_2$ be separable Hilbert spaces, and let
\[
    \mathcal T:\cL^1(\mathcal Y_1)\to\cL^1(\mathcal Y_2)
\]
be a completely positive trace-preserving map.  Then, for all
$\rho,\gamma\in\cD(\mathcal Y_1)$,
\begin{equation}\label{eq:fidelity-data-processing}
    \mathsf F\left(\mathcal T(\rho),\mathcal T(\gamma)\right)
    \geq
    \mathsf F(\rho,\gamma).
\end{equation}
\end{lemma}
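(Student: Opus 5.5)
We will prove the data processing inequality by reducing it to Uhlmann's theorem and the Stinespring dilation. Uhlmann's theorem in the form we use says that for $\rho,\gamma\in\cD(\mathcal Y)$ and any auxiliary space $\mathcal Z$ of sufficiently large dimension (e.g.\ $\mathcal Z\cong\mathcal Y$, or separable infinite-dimensional),
\[
    \mathsf F(\rho,\gamma)=\sup\left\{\abs{\langle u,v\rangle}^2 : u,v\in\mathcal Y\otimes\mathcal Z,\ \tr_{\mathcal Z}\ket u\bra u=\rho,\ \tr_{\mathcal Z}\ket v\bra v=\gamma\right\}.
\]
We will justify this quickly: purifications of $\rho$ have the form $u=(\sqrt\rho\otimes\1)\Omega_W$, i.e.\ they correspond to Hilbert--Schmidt operators $X$ with $XX^*=\rho$, hence $X=\sqrt\rho W$ with $W$ a partial isometry. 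Then $\langle u,v\rangle=\tr(X^*Y)=\tr(W^*\sqrt\rho\sqrt\gamma W')$. Its supremum over partial isometries $W,W'$ equals $\norm{\sqrt\rho\sqrt\gamma}_1$, by the polar decomposition of $\sqrt\rho\sqrt\gamma$ and the duality between $\cL^1$ and $\cL$. Only the inequality ``$\geq$'' for the attained pair, together with ``$\leq$'' for all pairs, is needed below.

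Next we invoke Stinespring's dilation for a CPTP map $\mathcal T$ between trace classes of separable spaces. There exist a separable $\mathcal E$ and an isometry $W:\mathcal Y_1\to\mathcal Y_2\otimes\mathcal E$ such that $\mathcal T(\sigma)=\tr_{\mathcal E}(W\sigma W^*)$. This holds for normal CP maps; the dual of a CPTP map on $\cL^1$ is a normal unital CP map on $\cL(\mathcal Y_2)$, so the standard theorem applies.

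The argument then runs as follows. We choose optimal purifications $u,v\in\mathcal Y_1\otimes\mathcal Z$ of $\rho,\gamma$ with $\abs{\langle u,v\rangle}^2=\mathsf F(\rho,\gamma)$; these exist by the polar-decomposition construction above. We set $u'=(W\otimes\1_{\mathcal Z})u$ and $v'=(W\otimes\1_{\mathcal Z})v$ in $\mathcal Y_2\otimes(\mathcal E\otimes\mathcal Z)$. Because $W$ is an isometry, $\langle u',v'\rangle=\langle u,v\rangle$. By Lemma~\ref{lem:partial-trace-lift}-type covariance and the fact that partial traces compose, $\tr_{\mathcal E\otimes\mathcal Z}\ket{u'}\bra{u'}=\tr_{\mathcal E}\big(W(\tr_{\mathcal Z}\ket u\bra u)W^*\big)=\mathcal T(\rho)$, and likewise $v'$ purifies $\mathcal T(\gamma)$. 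Applying the easy direction of Uhlmann's theorem, namely $\abs{\langle u',v'\rangle}\leq\norm{\sqrt{\mathcal T(\rho)}\sqrt{\mathcal T(\gamma)}}_1$, yields \eqref{eq:fidelity-data-processing}.

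The easy direction is itself the step to check carefully. Writing $u'$ and $v'$ as Hilbert--Schmidt operators $X,Y$ from $(\mathcal E\otimes\mathcal Z)^*$ to $\mathcal Y_2$ with $XX^*=\mathcal T(\rho)$ and $YY^*=\mathcal T(\gamma)$, we have polar forms $X=\sqrt{\mathcal T(\rho)}U_1$ and $Y=\sqrt{\mathcal T(\gamma)}U_2$ with partial isometries $U_1,U_2$. Then $\abs{\tr(X^*Y)}=\abs{\tr(U_1^*\sqrt{\mathcal T(\rho)}\sqrt{\mathcal T(\gamma)}U_2)}\leq\norm{\sqrt{\mathcal T(\rho)}\sqrt{\mathcal T(\gamma)}}_1$ by H\"older's inequality. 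The main obstacle is the infinite-dimensional bookkeeping, namely existence of the Stinespring isometry with a separable environment and attainment of the optimal purification. We handle the latter by taking $\mathcal Z=\mathcal Y_1$ and $W$ from the polar decomposition of $\sqrt\rho\sqrt\gamma$, extended to a unitary after enlarging $\mathcal Z$ if needed. If attainment is troublesome, an $\varepsilon$-approximate pair suffices.
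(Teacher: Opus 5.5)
Your proof is correct, but it takes a different route from the paper. The paper proves the lemma in one line by citing the data processing inequality for the sandwiched R\'enyi divergence (Theorem~1 of Frank--Lieb) at $\alpha=1/2$, where that divergence is $-\log\mathsf F$. You instead give the textbook argument. You use Uhlmann's variational characterization of $\norm{\sqrt\rho\sqrt\gamma}_1$, with the upper bound from H\"older and exact or $\varepsilon$-approximate attainment from the polar decomposition of $\sqrt\rho\sqrt\gamma$. You then push purifications of $\rho,\gamma$ forward by a Stinespring isometry built from the normal unital CP dual $\mathcal T^*$, which turns them into purifications of $\mathcal T(\rho),\mathcal T(\gamma)$ without changing their overlap.

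The citation is shorter, and it places $\mathsf F$ inside the R\'enyi family that motivates the paper. However, it rests on a considerably deeper result (monotonicity for the whole range $\alpha\geq 1/2$). That result is stated for matrices, so applying it on separable infinite-dimensional $\mathcal Y_1,\mathcal Y_2$ tacitly needs an approximation step.

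Your argument is elementary and works directly in separable Hilbert spaces. You correctly flag the two infinite-dimensional points: the separability of the Stinespring environment, and the possible need to enlarge $\mathcal Z$, since the partial isometry in the polar decomposition need not extend to a unitary on $\mathcal Y_1$. Your route also matches the Hilbert--Schmidt vectorization the paper already uses in the proof of Theorem~\ref{tm:fidelity-consequence}, where $\widetilde\Psi_{N,0}=|K_{1/N}\rangle\rangle$ is precisely an approximately optimal Uhlmann purification.

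Finally, for the only map the paper actually uses, $\mathcal T=\tr_{\gX^{\otimes k}}$, your Stinespring step is trivial. A purification of $\widetilde\Gamma_t^{N:k}$ in $\widetilde\gH^{\otimes k}\otimes\mathcal Z$ is already a purification of $\Gamma_t^{N:k}$ with environment $\gX^{\otimes k}\otimes\mathcal Z$. Since $P_t^{\otimes k}$ is pure, the optimal partner $\Phi_t^{\otimes k}\otimes\zeta$ is explicit. So only the easy direction of Uhlmann's theorem is needed there.
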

\begin{proof}
This is a standard consequence of the data processing inequality for the
sandwiched $\alpha$-R\'enyi relative entropy.  Indeed, applying
Theorem~1 of~\cite{frank2013monotonicity} with $\alpha=1/2$ gives precisely
the monotonicity of the squared Uhlmann fidelity under completely positive
trace-preserving maps.
\end{proof}

\begin{remark}
We use only the monotonicity of the fidelity under completely positive trace-preserving maps.  In the present argument, the relevant map is the partial trace.
\end{remark}

\section{Proofs of the Main Results}\label{sec: proofs purified pickl}

In this section we prove Theorem~\ref{tm:fidelity-consequence} assuming Theorem~\ref{thm:purified-pickl}. Since the proof of Theorem~\ref{thm:purified-pickl} follows from a modification of the original Pickl argument \cite{pickl2011simple}, its details are postponed to Appendix~\ref{sec: purified pickl} for completeness.

\begin{proof}[Proof of Theorem~\ref{tm:fidelity-consequence}]
Let $\gX$ be a separable auxiliary Hilbert space, we first prove an estimate for an arbitrary symmetric $\gX$-purification. Let $(\Gamma_0^N,\gamma_0)$ admits a symmetric
$\gX$-purification $(\widetilde\Psi_{N,0},\Phi_0)$. Let
\begin{equation*}
    \widetilde\Gamma_t^N:=\ket{\widetilde\Psi_{N,t}}\bra{\widetilde\Psi_{N,t}},
    \qquad
     P_t:=\ket{\Phi_t}\bra{\Phi_t}.
\end{equation*}
Using Corollaries~\ref{cor:lifted-von-Neumann} and
\ref{cor:lifted-Hartree-von-Neumann}, together with the uniqueness assumptions in Assumption~\ref{ass:analytic-framework}, we have for $t\in[0,T]$ that
\begin{equation*}
    \Gamma_t^N=\tr_{\gX^{\otimes N}}\widetilde\Gamma_t^N,\qquad \gamma_t=\tr_{\gX}P_t.
\end{equation*}
For $1\leq k\leq N$, denote by $\widetilde\Gamma_t^{N:k}$ and $\Gamma_t^{N:k}$ the $k$-particle marginals of $\widetilde\Gamma_t^N$ and $\Gamma_t^N$, respectively.

For the lifted $k$-particle marginal we use the notation
\begin{equation*}
    P_t^{\otimes k}=p_1^t\cdots p_k^t.
\end{equation*}
Since the projections $q_j^t=\1-p_j^t$ commute for different $j$,
\begin{equation*}
    \1-p_1^t\cdots p_k^t
    =
    \1-
    \prod_{j=1}^k(\1-q_j^t)
    \leq
    \sum_{j=1}^k q_j^t.
\end{equation*}
Therefore, by symmetry,
\begin{align*}
&1-
\tr_{\widetilde\gH^{\otimes k}}
\left(\widetilde\Gamma_t^{N:k}P_t^{\otimes k}\right) \\
&\quad =
\tr_{\widetilde\gH_N}
\left(
\widetilde\Gamma_t^N
(\1-p_1^t\cdots p_k^t)
\right) \leq
\sum_{j=1}^k
\tr_{\widetilde\gH_N}
\left(
\widetilde\Gamma_t^Nq_j^t
\right)
=
 k\,\widetilde\alpha_N(t).
\end{align*}
Since $P_t^{\otimes k}$ is a rank-one projection,
\begin{equation*}
    \mathsf F\left(\widetilde\Gamma_t^{N:k},P_t^{\otimes k}\right)
    =
    \tr_{\widetilde\gH^{\otimes k}}
    \left(\widetilde\Gamma_t^{N:k}P_t^{\otimes k}\right).
\end{equation*}
Thus,
\begin{equation*}
    1-
    \mathsf F\left(\widetilde\Gamma_t^{N:k},P_t^{\otimes k}\right)
    \leq
    k\,\widetilde\alpha_N(t).
\end{equation*}
The partial trace $\tr_{\gX^{\otimes k}}$ is a completely positive trace-preserving map. Using Lemma~\ref{lem:fidelity-data-processing}, we obtain
\begin{align*}
    \mathsf F\left(\Gamma_t^{N:k},\gamma_t^{\otimes k}\right)
    =
    \mathsf F\left(
    \tr_{\gX^{\otimes k}}\widetilde\Gamma_t^{N:k},
    \tr_{\gX^{\otimes k}}P_t^{\otimes k}
    \right) \geq
    \mathsf F\left(\widetilde\Gamma_t^{N:k},P_t^{\otimes k}\right),
\end{align*}
which, combined with Theorem~\ref{thm:purified-pickl}, gives
\begin{equation}\label{eq:fidelity-alpha-bound}
    1-\mathsf F\left(\Gamma_t^{N:k},\gamma_t^{\otimes k}\right)
    \leq
    k\,\widetilde\alpha_N(t)
    \leq
    k\exp\left(8\int_0^t\Lambda_s\dd s\right)\left(\widetilde\alpha_N(0)+\frac1N\right).
\end{equation}
Now we choose a concrete symmetric $\gX$-purification
$(\widetilde\Psi_{N,0},\Phi_0)$ of $(\Gamma_0^N,\gamma_0)$.

We take
\[
    \gX=\overline{\gH}.
\]
Then
\[
    \widetilde\gH_N
    =
    (\gH\otimes\overline{\gH})^{\otimes N}
    \cong
    \gH_N\otimes\overline{\gH_N}.
\]
We use the Hilbert--Schmidt vectorization
\[
    \gH_N\otimes\overline{\gH_N}
    \cong
    \mathcal L^2(\gH_N),
    \qquad
    K\mapsto |K\rangle\rangle.
\]
Under this identification,
\[
    \tr_{\overline{\gH_N}}
    |K\rangle\rangle\langle\langle K|
    =
    KK^*,
\]
and, if $U_\pi$ denotes the permutation representation on $\gH_N$,
\[
    \widetilde U_\pi |K\rangle\rangle
    =
    |U_\pi K U_\pi^*\rangle\rangle.
\]

Since both $\Gamma_0^N$ and $\gamma_0^{\otimes N}$ are permutation-invariant, the operator
$
    A:=\sqrt{\Gamma_0^N}\sqrt{\gamma_0^{\otimes N}}
$
commutes with all $U_\pi$.  Let
\[
    \mathcal M
    :=
    \{B\in\cL(\gH_N): U_\pi BU_\pi^*=B,\ \forall\pi\in \mathfrak{S}_N\}.
\]
Then $\mathcal M$ is a unital $C^*$-algebra and $A\in\mathcal M\cap\mathcal L^1(\gH_N)$.
Averaging over the finite group $\mathfrak{S}_N$ shows that
\[
    \|A\|_1
    =
    \sup_{\substack{B\in\mathcal M\\ \|B\|_{\rm op}\le1}}
    \left|
    \tr(B^*A)
    \right|.
\]
Since the closed unit ball of a unital $C^*$-algebra is the closed convex hull
of its unitaries, for every $\varepsilon>0$ there exists a unitary
$V_\varepsilon\in\mathcal M$ such that
\[
    \left|
    \tr(V_\varepsilon^*A)
    \right|^2
    \ge
    \|A\|_1^2-\varepsilon.
\]
We choose $\varepsilon=1/N$.  Since
$
    \|A\|_1^2
    =
    \mathsf F(\Gamma_0^N,\gamma_0^{\otimes N})
$,
we get
\[
    \left|
    \tr(V_{1/N}^*
    \sqrt{\Gamma_0^N}
    \sqrt{\gamma_0^{\otimes N}})
    \right|^2
    \ge
    \mathsf F(\Gamma_0^N,\gamma_0^{\otimes N})-\frac1N.
\]

Define
$
    K_{1/N}:=\sqrt{\Gamma_0^N}\,V_{1/N}
$. Then
$
    K_{1/N} K_{1/N}^*=\Gamma_0^N
$.
Moreover, since $\sqrt{\Gamma_0^N}$ and $V_{1/N}$ both belong to
$\mathcal M$,
\[
    U_\pi K_{1/N} U_\pi^*=K_{1/N},
    \qquad \forall\pi\in \mathfrak{S}_N.
\]
We now set
\[
    \widetilde\Psi_{N,0}:=|K_{1/N}\rangle\rangle,
    \qquad
    \Phi_0:=|\sqrt{\gamma_0}\rangle\rangle.
\]
Then
$$
    \tr_{\overline{\gH_N}}
    |\widetilde\Psi_{N,0}\rangle\langle\widetilde\Psi_{N,0}|
    =
    \Gamma_0^N,
\quad
    \widetilde U_\pi\widetilde\Psi_{N,0}
    =
    \widetilde\Psi_{N,0},
    \forall\pi\in S_N.
\quad
    \tr_{\overline{\gH}}|\Phi_0\rangle\langle\Phi_0|
    =
    \gamma_0.
$$
Thus $(\widetilde\Psi_{N,0},\Phi_0)$ is a symmetric
$\overline{\gH}$-purification of $(\Gamma_0^N,\gamma_0)$.
Furthermore,
$
    \Phi_0^{\otimes N}
    =
    |\sqrt{\gamma_0^{\otimes N}}\rangle\rangle
$.
Hence
\begin{align*}
    \left|
    \left\langle
    \widetilde\Psi_{N,0},
    \Phi_0^{\otimes N}
    \right\rangle
    \right|^2
    =
    \left|
    \tr\left(
    V_{1/N}^*
    \sqrt{\Gamma_0^N}
    \sqrt{\gamma_0^{\otimes N}}
    \right)
    \right|^2  
    \ge
    \mathsf F(\Gamma_0^N,\gamma_0^{\otimes N})-\frac1N .
\end{align*}

We now estimate the initial lifted counting functional.  Since
\[
    |\Phi_0^{\otimes N}\rangle\langle\Phi_0^{\otimes N}|
    =
    \prod_{j=1}^N p_j^0
    \le
    \frac1N\sum_{j=1}^N p_j^0,
\]
we have
\begin{align*}
    \left|
    \left\langle
    \widetilde\Psi_{N,0},
    \Phi_0^{\otimes N}
    \right\rangle
    \right|^2
    \le
    \left\langle
    \widetilde\Psi_{N,0},
    \frac1N\sum_{j=1}^N p_j^0
    \widetilde\Psi_{N,0}
    \right\rangle  
    =
    1-\widetilde\alpha_N(0).
\end{align*}
Consequently,
\[
    \widetilde\alpha_N(0)
    \le
    1-\mathsf F(\Gamma_0^N,\gamma_0^{\otimes N})
    +\frac1N.
\]
Substituting this into \eqref{eq:fidelity-alpha-bound}, we obtain
\begin{align*}
    1-\mathsf F\left(\Gamma_t^{N:k},\gamma_t^{\otimes k}\right)
    &\le
    k\exp\left(8\int_0^t\Lambda_s\dd s\right)
    \left(
    1-\mathsf F(\Gamma_0^N,\gamma_0^{\otimes N})
    +\frac2N
    \right)  \\
    &\le
    2k\exp\left(8\int_0^t\Lambda_s\dd s\right)
    \left(
    1-\mathsf F(\Gamma_0^N,\gamma_0^{\otimes N})
    +\frac1N
    \right).
\end{align*}
This proves \eqref{eq: fidelity final bound k}.

Finally, the Fuchs--van de Graaf inequality in Lemma~\ref{lem:fuchs-van-de-graaf} gives
\[
    \norm{\Gamma_t^{N:k}-\gamma_t^{\otimes k}}_1
    \le
    2
    \left(
    1-\mathsf F(\Gamma_t^{N:k},\gamma_t^{\otimes k})
    \right)^{1/2}.
\]
Using \eqref{eq: fidelity final bound k}, we get
\[
    \norm{\Gamma_t^{N:k}-\gamma_t^{\otimes k}}_1
    \le
    2\sqrt{2k}\,
    \exp\left(4\int_0^t\Lambda_s\dd s\right)
    \left(
    1-\mathsf F(\Gamma_0^N,\gamma_0^{\otimes N})
    +\frac1N
    \right)^{1/2}.
\]
This proves \eqref{eq: trace final bound k}.
\end{proof}

\section{Sufficient Conditions for Singular Interactions}\label{sec: singular interactions}

The condition \eqref{eq: Lambda main} is the intrinsic condition used in the proof.  It is useful to record two simple criteria which imply it. Recall that from Corollary~\ref{cor:lifted-Hartree-von-Neumann} and Assumption~\ref{ass:analytic-framework}, we have $\rho_{\gamma_t}=\rho_{\Phi_t}$ for the time interval under consideration.

\begin{proposition}\label{prop:Lp-condition-Lambda}
Assume that $V\in L^{2r}(\Omega)$ for some $r\geq1$ and set $s=r/(r-1)$, with the convention $s=\infty$ when $r=1$.  If $\rho_{\gamma_t}\in L^s(\Omega)$, then
\begin{equation}\label{eq: Lambda Lp condition}
    \Lambda_t
    \leq
    2\norm{V}_{L^{2r}(\Omega)}\norm{\rho_{\gamma_t}}_{L^s(\Omega)}^{1/2}.
\end{equation}
Consequently, $\int_0^T\Lambda_t\dd t<\infty$ follows from
\begin{equation*}
    \int_0^T\norm{\rho_{\gamma_t}}_{L^s(\Omega)}^{\frac{1}{2}}\dd t<\infty.
\end{equation*}
\end{proposition}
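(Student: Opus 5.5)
The bound is a direct consequence of Minkowski's and Hölder's inequalities, together with normalization of $\rho:=\rho_{\gamma_t}$. We fix $t$ and a point $y\in\Omega$. Since $\rho\geq0$, the quantity inside the essential supremum in \eqref{eq: Lambda main} is the square of a weighted $L^2$ norm, $\norm{\sqrt\rho\,(V(\cdot-y)-V\ast\rho)}_{L^2}$. The triangle inequality in $L^2(\rho\,\dd x)$ splits it as
\[
    \Bigl(\int\rho(x)\abs{V(x-y)}^2\dd x\Bigr)^{1/2}
    +\Bigl(\int\rho(x)\abs{V\ast\rho(x)}^2\dd x\Bigr)^{1/2}.
\]
We then show that each term is at most $\norm{V}_{L^{2r}}\norm{\rho}_{L^s}^{1/2}$, which gives the factor $2$.

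For the first term, Hölder's inequality with exponents $s$ and $r$ (which are conjugate, $1/s+1/r=1$) gives
\[
    \int\rho\abs{V(\cdot-y)}^2\leq\norm{\rho}_{L^s}\norm{\abs{V}^2}_{L^r}=\norm{\rho}_{L^s}\norm{V}_{L^{2r}}^2.
\]
Here we use translation invariance of Lebesgue measure on $\Omega$ (both $\R^d$ and $\mathbb T^d$), so the bound is uniform in $y$. For $r=1$, $s=\infty$, the same estimate reads $\int\rho\abs{V}^2\leq\norm{\rho}_\infty\norm{V}_2^2$.

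For the second term, we use $\int\rho=\tr\gamma_t=1$ and $\rho\geq0$. Jensen's (or Cauchy--Schwarz) inequality with respect to the probability measure $\rho(z)\dd z$ gives, pointwise,
\[
    \abs{V\ast\rho(x)}^2\leq\int\abs{V(x-z)}^2\rho(z)\dd z.
\]
Integrating against $\rho(x)\dd x$ and applying Fubini, this term is bounded by $\int\rho(z)\bigl(\int\rho(x)\abs{V(x-z)}^2\dd x\bigr)\dd z$. By the first-term estimate, the inner integral is at most $\norm{\rho}_{L^s}\norm{V}_{L^{2r}}^2$ for every $z$, and the outer integral contributes a factor $\int\rho=1$. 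Taking square roots, adding the two terms, and then taking the essential supremum over $y$ yields \eqref{eq: Lambda Lp condition}.

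The integrability consequence follows at once by integrating \eqref{eq: Lambda Lp condition} over $[0,T]$. The only step needing care is the second term: we must avoid bounding $\norm{V\ast\rho}_\infty$ by Young's inequality, which would require $\rho\in L^{(2r)'}$ instead of $L^s$. Averaging with respect to $\rho$ itself sidesteps this and keeps the same exponent. We also note that measurability of the $y$-dependent integral and finiteness of $V\ast\rho$ almost everywhere follow from these same bounds.
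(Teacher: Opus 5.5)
Your proof is correct and follows the paper's argument: H\"older for the $\abs{V(\cdot-y)}^2$ term, Jensen with respect to the probability density $\rho$ (i.e.\ $\abs{V\ast\rho}^2\le\abs{V}^2\ast\rho$) plus Fubini for the mean-field term, and the triangle inequality in $L^2(\rho\,\dd x)$ where the paper uses $\abs{a-b}^2\le2\abs{a}^2+2\abs{b}^2$; both routes give the same factor $2$. One minor aside: your closing remark about Young's inequality is overstated, because $\rho\in L^1\cap L^s$ with $\norm{\rho}_{L^1}=1$ interpolates to $\norm{\rho}_{L^{(2r)'}}\le\norm{\rho}_{L^s}^{1/2}$, so the bound $\norm{V\ast\rho}_{L^\infty}\le\norm{V}_{L^{2r}}\norm{\rho}_{L^{(2r)'}}$ yields exactly the same estimate.
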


\begin{proof}
Let
\begin{equation*}
    K_t^2
    :=
    \operatorname*{ess\,sup}_{y\in\Omega}
    \int_\Omega \rho_{\gamma_t}(x)\abs{V(x-y)}^2\dd x.
\end{equation*}
By H\"older's inequality,
\begin{equation*}
    K_t^2
    \leq
    \norm{V}_{L^{2r}}^2\norm{\rho_{\gamma_t}}_{L^s}.
\end{equation*}
Moreover,
\begin{align*}
    \int \rho_{\gamma_t}(x)\abs{V\ast\rho_{\gamma_t}(x)}^2\dd x
    &\leq
    \int \rho_{\gamma_t}(x)(\abs V^2*\rho_{\gamma_t})(x)\dd x
    \leq K_t^2.
\end{align*}
Using $\abs{a-b}^2\leq 2\abs a^2+2\abs b^2$ in \eqref{eq: Lambda main}, we obtain $\Lambda_t^2\leq4K_t^2$, proving the claim.
\end{proof}

\begin{proposition}[Coulomb-type control]\label{prop:coulomb-control-Lambda}
Let $\Omega=\R^3$ and $V(x)=\lambda\abs{x}^{-1}$.  If $\Phi_t\in H^1(\R^3;\gX)$, then
\begin{equation}\label{eq: Coulomb Lambda condition}
    \Lambda_t
    \leq
    4\abs\lambda\norm{\nabla\Phi_t}_{L^2(\R^3;\gX)}.
\end{equation}
Equivalently,
\begin{equation*}
    \Lambda_t
    \leq
    4\abs\lambda\left(\tr_\gH(-\Delta\gamma_t)\right)^{1/2},
    \qquad
    \gamma_t=\tr_\gX\ket{\Phi_t}\bra{\Phi_t}.
\end{equation*}
\end{proposition}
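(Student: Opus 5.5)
The plan is to bound the two pieces of $\abs{V(x-y)-V\ast\rho_{\gamma_t}(x)}^2$ separately, as in the proof of Proposition~\ref{prop:Lp-condition-Lambda}. Each piece reduces to Hardy's inequality applied to $\sqrt{\rho_t}$, where $\rho_t=\rho_{\Phi_t}=\rho_{\gamma_t}$ by Corollary~\ref{cor:lifted-Hartree-von-Neumann}. First, using $\abs{a-b}^2\le 2\abs a^2+2\abs b^2$,
\[
\Lambda_t^2\le 2\operatorname*{ess\,sup}_{y}\int\rho_t(x)\frac{\lambda^2}{\abs{x-y}^2}\dd x+2\int\rho_t(x)\abs{V\ast\rho_t(x)}^2\dd x .
\]

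For the first term I will use the translated Hardy inequality on $\R^3$,
\[
\int\frac{\abs{f(x)}^2}{\abs{x-y}^2}\dd x\le 4\int\abs{\nabla f}^2\dd x ,
\]
uniformly in $y$, with $f=\sqrt{\rho_t}=\norm{\Phi_t(\cdot)}_\gX$. The key pointwise input is the diamagnetic-type inequality $\abs{\nabla\norm{\Phi_t(x)}_\gX}\le\norm{\nabla\Phi_t(x)}_\gX$ a.e. This gives $\sqrt{\rho_t}\in H^1$ with $\norm{\nabla\sqrt{\rho_t}}_{L^2}\le\norm{\nabla\Phi_t}_{L^2(\R^3;\gX)}$, so the first term is at most $4\lambda^2\norm{\nabla\Phi_t}^2$ for every $y$.

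For the second term I will apply Cauchy--Schwarz with respect to the probability measure $\rho_t(z)\dd z$, using $\int\rho_t=\norm{\Phi_t}^2=1$:
\[
\abs{V\ast\rho_t(x)}^2\le\lambda^2\int\frac{\rho_t(z)}{\abs{x-z}^2}\dd z\le 4\lambda^2\norm{\nabla\Phi_t}^2 .
\]
Integrating against $\rho_t(x)\dd x$ (total mass $1$) again gives at most $4\lambda^2\norm{\nabla\Phi_t}^2$. Adding the two contributions yields $\Lambda_t^2\le 16\lambda^2\norm{\nabla\Phi_t}^2$, which is \eqref{eq: Coulomb Lambda condition}.

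For the equivalent form, I expand $\Phi_t=\sum_m\phi_m\otimes e_m$ as in Corollary~\ref{cor:lifted-Hartree-von-Neumann}. Then
\[
\norm{\nabla\Phi_t}^2=\sum_m\norm{\nabla\phi_m}^2=\tr_\gH\Bigl(-\Delta\sum_m\ket{\phi_m}\bra{\phi_m}\Bigr)=\tr_\gH(-\Delta\gamma_t).
\]

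The main technical point is the diamagnetic inequality for $\gX$-valued $H^1$ functions. I would prove it by regularizing: set $f_\varepsilon=\sqrt{\rho_t+\varepsilon^2}-\varepsilon$. Then a.e.
\[
\nabla f_\varepsilon=\frac{\operatorname{Re}\langle\Phi_t,\nabla\Phi_t\rangle_\gX}{\sqrt{\rho_t+\varepsilon^2}},
\]
which has modulus at most $\norm{\nabla\Phi_t(x)}_\gX$ by Cauchy--Schwarz in $\gX$. Letting $\varepsilon\to0$ and using weak lower semicontinuity of the $H^1$ seminorm gives the claim. The chain rule here can be justified on finite-dimensional truncations $\sum_{m\le M}\phi_m\otimes e_m$ and then passing to the limit $M\to\infty$.
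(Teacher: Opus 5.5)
Your proof is correct and follows the paper's route. You split $\abs{V(x-y)-V\ast\rho_t(x)}^2$ as in Proposition~\ref{prop:Lp-condition-Lambda}, bound both pieces by Hardy's inequality uniformly in the center, and get $\Lambda_t\le 2K_t\le 4\abs\lambda\norm{\nabla\Phi_t}$. The one difference is that you derive the vector-valued Hardy inequality via a diamagnetic bound for $\sqrt{\rho_t}$, while the paper just cites it; it also follows immediately by applying scalar Hardy to each component $\phi_m$ and summing, so your regularization step is not needed.
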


\begin{proof}
By the vector-valued Hardy inequality,
\begin{equation*}
    \int_{\R^3}\frac{\norm{\Phi_t(x)}_{\gX}^2}{\abs{x-y}^2}\dd x
    \leq
    4\int_{\R^3}\norm{\nabla\Phi_t(x)}_{\gX}^2\dd x
\end{equation*}
for every $y\in\R^3$.  Therefore,
\begin{equation*}
    K_t^2
    :=
    \sup_y\int \rho_t(x)\abs{V(x-y)}^2\dd x
    \leq
    4\abs\lambda^2\norm{\nabla\Phi_t}_{L^2(\R^3;\gX)}^2.
\end{equation*}
As in the proof of Proposition \ref{prop:Lp-condition-Lambda}, $\Lambda_t\leq2K_t$, which gives \eqref{eq: Coulomb Lambda condition}.
\end{proof}

\appendix 

\section{The purified Pickl estimate}\label{sec: purified pickl}

\begin{proof}[Proof of Theorem \ref{thm:purified-pickl}]
We write $p_j,q_j,\widehat n$ instead of $p_j^t,q_j^t,\widehat n_t$ when no confusion is possible.  Define the lifted mean-field Hamiltonian
\begin{equation*}
    \widetilde H_N^{\Phi_t}:=\sum_{j=1}^N(h_j+V\ast\rho_{\Phi_t}(x_j)).
\end{equation*}
Since $\partial_t q_j^t=-\ii[h_j+V\ast\rho_{\Phi_t}(x_j),q_j^t]$, one has
\begin{equation*}
    \partial_t\widehat n_t=-\ii[\widetilde H_N^{\Phi_t},\widehat n_t].
\end{equation*}
Using \eqref{eq: lifted Schro main} and the convention that the scalar product is linear in the second variable, we obtain
\begin{align}\label{eq:alpha-derivative-commutator}
    \frac{\dd}{\dd t}\widetilde\alpha_N(t)
    &=\ii\left\langle\widetilde\Psi_{N,t},[\widetilde H_N,\widehat n_t]\widetilde\Psi_{N,t}\right\rangle
      -\ii\left\langle\widetilde\Psi_{N,t},[\widetilde H_N^{\Phi_t},\widehat n_t]\widetilde\Psi_{N,t}\right\rangle \nn\\
    &=\ii\left\langle\widetilde\Psi_{N,t},[\widetilde H_N-\widetilde H_N^{\Phi_t},\widehat n_t]\widetilde\Psi_{N,t}\right\rangle .
\end{align}
By symmetry of $\widetilde\Psi_{N,t}$ and the normalization $1/(N-1)$ in \eqref{eq: lifted HN main}, \eqref{eq:alpha-derivative-commutator} reduces to
\begin{equation}\label{eq:alpha-derivative-D}
    \frac{\dd}{\dd t}\widetilde\alpha_N(t)
    =\ii\left\langle\widetilde\Psi_{N,t},[D_t,q_1]\widetilde\Psi_{N,t}\right\rangle,
\end{equation}
where
\begin{equation}\label{eq: D t main}
    D_t:=V(x_1-x_2)-V\ast\rho_{\Phi_t}(x_1).
\end{equation}
This is the two-body fluctuation multiplication operator defined on $\widetilde{\gH}_N$. The following properties of $D_t$ are crucial for the proof:

\begin{lemma}
\label{lem:condensate-cancellation-projected-square}
Let $D_t=(D_t)_{12}$ be the two-body fluctuation multiplication operator defined in \eqref{eq: D t main}.
Then the following two estimates hold.

First, we have the following cancellation rule:
\begin{equation}\label{eq:p2Dp2-cancellation-lemma}
    p_{2}^tD_t p_{2}^t=0.
\end{equation}

Second, the singularity of $D_t$ is controlled after projection:
\begin{equation}\label{eq:p1D2p1-projected-square-lemma}
    \norm{p_{1}^tD_t^2p_{1}^t}_{\rm op}
    \le
    \Lambda_t^2.
\end{equation}
Equivalently,
\[
    \norm{D_t p_{1}^t}_{\rm op}\le \Lambda_t.
\]
If $D_t$ is not globally bounded, then \eqref{eq:p1D2p1-projected-square-lemma}
is understood in the quadratic form sense.
\end{lemma}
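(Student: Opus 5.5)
Both claims reduce to explicit kernel computations for the two-body multiplication operator $D_t$ acting on $\widetilde\gH\otimes\widetilde\gH\simeq L^2(\Omega^2;\gX\otimes\gX)$. Throughout, $p_2^t$ acts on the second lifted variable $(x_2,\xi_2)$ and $D_t$ acts only on the physical variables $x_1,x_2$.

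\emph{Cancellation.} For the cancellation rule \eqref{eq:p2Dp2-cancellation-lemma}, write $p_2^t=\1\otimes\ket{\Phi_t}\bra{\Phi_t}$ and compute the partial expectation of $D_t$ in the second slot. For any operator $A$ that acts on the second lifted factor only through multiplication by a function of $x_2$, one has
\[
p_2 A p_2=\Bigl(\int_\Omega \norm{\Phi_t(x_2)}_\gX^2\,A(x_1,x_2)\,\dd x_2\Bigr)\otimes p,
\]
where the prefactor is a multiplication operator in $x_1$. The computation is to take $\Psi\in\widetilde\gH_N$, apply $p_2$, which gives $\Phi_t(x_2)\langle\Phi_t,\Psi\rangle_{2}$ with the inner product taken over $(x_2,\xi_2)$, and then multiply by $D_t$ and project again. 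Applying this with $A=D_t$ gives the prefactor
\[
\int_\Omega \rho_{\Phi_t}(x_2)V(x_1-x_2)\,\dd x_2-V\ast\rho_{\Phi_t}(x_1).
\]
Since $V$ is even, the first term is $(V\ast\rho_{\Phi_t})(x_1)$, so the prefactor vanishes identically. The convolution is well defined under Assumption~\ref{ass:analytic-framework}.

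\emph{Projected square bound.} The same partial-expectation formula, now in the first slot with $A=D_t^2$, gives
\[
p_1D_t^2p_1=p\otimes M(x_2),\qquad M(y):=\int_\Omega\rho_{\Phi_t}(x)\,\abs{V(x-y)-V\ast\rho_{\Phi_t}(x)}^2\,\dd x .
\]
Here $M$ is a nonnegative multiplication operator in $x_2$, tensored with identities on the remaining factors. Its operator norm is therefore $\operatorname*{ess\,sup}_y M(y)$. Using $\rho_{\Phi_t}=\rho_{\gamma_t}$, which holds by Corollary~\ref{cor:lifted-Hartree-von-Neumann} and uniqueness, this supremum equals $\Lambda_t^2$ from \eqref{eq: Lambda main}. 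The equivalent form follows from $\norm{D_tp_1}_{\rm op}^2=\norm{p_1D_t^2p_1}_{\rm op}$.

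\emph{Main obstacle.} The hard step is justifying this when $D_t$ is unbounded. I would work with quadratic forms on a dense set, for instance vectors with bounded compactly supported physical dependence. On such vectors, Fubini and Cauchy--Schwarz in $\gX$ give
\[
\norm{D_tp_1\Psi}^2=\int \rho(x_1)\,\abs{D_t}^2\,\norm{\langle\Phi_t,\Psi\rangle_1(x_2,\ldots)}^2\le\Lambda_t^2\norm{\Psi}^2,
\]
where the norm inside the integral is over the remaining variables. Because of this bound, $D_tp_1$ extends to a bounded operator with norm at most $\Lambda_t$. The cancellation identity is then understood in the same form sense, using $D_tp_2\Psi\in L^2$, which follows by the symmetric argument.
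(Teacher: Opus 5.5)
Your proposal is correct and follows essentially the paper's own argument: the cancellation comes from integrating $\rho(x_2)V(x_1-x_2)$ over the projected slot to recover $W(x_1)=V\ast\rho(x_1)$ (evenness of $V$ is not actually needed for this step), and the projected square bound is the same Fubini computation of $\norm{D_tp_1\Psi}^2$, whose inner integral is exactly the integrand defining $\Lambda_t^2$. One small imprecision is that $D_t=V(x_1-x_2)-W(x_1)$ is not symmetric in the two particles, so boundedness of $D_tp_2$ is not the mirror image of the $D_tp_1$ bound (the relevant integral is the variance $\int\rho(x_2)\abs{V(x_1-x_2)-W(x_1)}^2\dd x_2$); it can still be controlled by a multiple of $\Lambda_t$ using the evenness of $V$, and the paper does not address this point.
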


We now estimate the right-hand side of \eqref{eq:alpha-derivative-D} by using Lemma~\ref{lem:condensate-cancellation-projected-square}.  Since $D_t$ and $q_1$ are self-adjoint,
\begin{equation}\label{eq:comm-to-pDq}
    \left|\frac{\dd}{\dd t}\widetilde\alpha_N(t)\right|
    \leq
    2\abs{\left\langle\widetilde\Psi_{N,t},p_1D_tq_1\widetilde\Psi_{N,t}\right\rangle}.
\end{equation}
Insert $\1=p_2+q_2$ on both sides of $D_t$.  The term with $p_2$ on both sides vanishes by \eqref{eq:p2Dp2-cancellation-lemma} in Lemma~\ref{lem:condensate-cancellation-projected-square}:
\begin{equation*}
    \left\langle\widetilde\Psi,p_1p_2D_tq_1p_2\widetilde\Psi\right\rangle=0.
\end{equation*}
We are left with
\begin{align*}
    T_{01}:=\left\langle\widetilde\Psi,p_1p_2D_tq_1q_2\widetilde\Psi\right\rangle,\ 
    T_{10}:=\left\langle\widetilde\Psi,p_1q_2D_tq_1p_2\widetilde\Psi\right\rangle,\ 
    T_{11}:=\left\langle\widetilde\Psi,p_1q_2D_tq_1q_2\widetilde\Psi\right\rangle,
\end{align*}
where $\widetilde\Psi=\widetilde\Psi_{N,t}$.

For $T_{10}$, using \eqref{eq:p1D2p1-projected-square-lemma} in Lemma~\ref{lem:condensate-cancellation-projected-square} and symmetry,
\begin{align*}
    |T_{10}|
    =\abs{\left\langle D_t p_1q_2\widetilde\Psi,q_1p_2\widetilde\Psi\right\rangle}
    \leq \Lambda_t\norm{q_2\widetilde\Psi}\norm{q_1\widetilde\Psi}
    =\Lambda_t\widetilde\alpha_N(t).
\end{align*}
The same argument gives
\begin{equation*}
    |T_{11}|
    \leq\Lambda_t\widetilde\alpha_N(t).
\end{equation*}

It remains to estimate $T_{01}$.  We use the Moore--Penrose inverse of $\widehat n$ on the orthogonal complement of $\Ker\widehat n$.  Since $q_1q_2$ vanishes on $\Ker\widehat n$, inserting $\widehat n^{1/2}\widehat n^{-1/2}$ is legitimate:
\begin{align}\label{eq:T01-split}
    |T_{01}|
    &\leq
    \norm{\widehat n^{1/2}D_t p_1p_2\widetilde\Psi}
    \norm{\widehat n^{-1/2}q_1q_2\widetilde\Psi}.
\end{align}
For the first factor,
\begin{align*}
    \norm{\widehat n^{1/2}D_t p_1p_2\widetilde\Psi}^2
    &=\frac1N\sum_{j=1}^N
    \norm{q_jD_t p_1p_2\widetilde\Psi}^2.
\end{align*}
If $j\geq3$, then $q_j$ commutes with $D_t,p_1,p_2$, and \eqref{eq:p1D2p1-projected-square-lemma} gives
\begin{equation*}
    \norm{q_jD_t p_1p_2\widetilde\Psi}^2
    \leq \Lambda_t^2\norm{q_j\widetilde\Psi}^2.
\end{equation*}
The two remaining indices are estimated directly by \eqref{eq:p1D2p1-projected-square-lemma}. Hence,
\begin{equation}\label{eq:first-factor-counting}
    \norm{\widehat n^{1/2}D_t p_1p_2\widetilde\Psi}^2
    \leq
    \Lambda_t^2\left(\widetilde\alpha_N(t)+\frac2N\right).
\end{equation}
For the second factor, using $\sum_{j=1}^Nq_j=N\widehat n$ and symmetry,
\begin{align}\label{eq:second-factor-counting}
    N(N-1)\norm{\widehat n^{-1/2}q_1q_2\widetilde\Psi}^2
    &\leq
    \sum_{i,j=1}^N
    \left\langle\widetilde\Psi,\widehat n^{-1}q_iq_j\widetilde\Psi\right\rangle\nn\\
    &=N^2\left\langle\widetilde\Psi,\widehat n\widetilde\Psi\right\rangle
    =N^2\widetilde\alpha_N(t).
\end{align}
Combining \eqref{eq:T01-split}, \eqref{eq:first-factor-counting}, and \eqref{eq:second-factor-counting}, and using $N\ge2$, we obtain
\begin{equation*}
    |T_{01}|
    \leq
    2\Lambda_t\left(\widetilde\alpha_N(t)+\frac1N\right).
\end{equation*}
Together with the bounds for $T_{10}$ and $T_{11}$, \eqref{eq:comm-to-pDq} gives
\begin{equation}\label{eq:differential-alpha-final}
    \left|\frac{\dd}{\dd t}\widetilde\alpha_N(t)\right|
    \leq
    8\Lambda_t\left(\widetilde\alpha_N(t)+\frac1N\right).
\end{equation}
Applying Gronwall's inequality to $\widetilde\alpha_N(t)+N^{-1}$ yields \eqref{eq: alpha gronwall bound main}.
\end{proof}

It remains to prove Lemma~\ref{lem:condensate-cancellation-projected-square}.

\begin{proof}[Proof of Lemma~\ref{lem:condensate-cancellation-projected-square}]
For notational simplicity, we omit the time dependence and write
\[
    \Phi=\Phi_t,
    \qquad
    \rho=\rho_{\Phi_t}=\rho_{\gamma_t},
    \qquad
    W=V\ast\rho,
    \qquad
    p_j=p_{j}^t,
    \qquad
    D=D_t.
\]
We use the identification
\[
    \widetilde\gH_N
    \cong
    L^2(\Omega^N;\gX^{\otimes N}).
\]

We first prove the cancellation
\[
    p_2Dp_2=0.
\]
It suffices to prove
\[
    p_2V(x_1-x_2)p_2=W(x_1)p_2.
\]
Let $\Psi,\Xi\in \widetilde\gH_N$ be arbitrary.  Since $p_2$ is the
rank-one projection onto $\Phi$ in the second lifted particle variable, we may
write
\[
    p_2\Psi=\Phi(x_2)\otimes \eta(x_1,x_3,\ldots,x_N),
\]
and
\[
    p_2\Xi=\Phi(x_2)\otimes \zeta(x_1,x_3,\ldots,x_N),
\]
for suitable
\[
    \eta,\zeta\in L^2(\Omega^{N-1};\gX^{\otimes(N-1)}).
\]
Then
\begin{align*}
&\inprod{p_2\Xi}{V(x_1-x_2)p_2\Psi} \\
&\quad =
\int_{\Omega^N}
V(x_1-x_2)
\norm{\Phi(x_2)}_{\gX}^2
\inprod{\zeta(x_1,x_3,\ldots,x_N)}
       {\eta(x_1,x_3,\ldots,x_N)}_{\gX^{\otimes(N-1)}}
\,\d x_1\cdots\d x_N .
\end{align*}
Since
\[
    \norm{\Phi(x_2)}_{\gX}^2=\rho(x_2),
\]
the integration in $x_2$ gives
\[
    \int_\Omega V(x_1-x_2)\rho(x_2)\,\d x_2
    =
    (V*\rho)(x_1)
    =
    W(x_1).
\]
Therefore,
\begin{align*}
&\inprod{p_2\Xi}{V(x_1-x_2)p_2\Psi} \\
&\quad =
\int_{\Omega^{N-1}}
W(x_1)
\inprod{\zeta(x_1,x_3,\ldots,x_N)}
       {\eta(x_1,x_3,\ldots,x_N)}_{\gX^{\otimes(N-1)}}
\,\d x_1\d x_3\cdots\d x_N  \\
&\quad =
\inprod{p_2\Xi}{W(x_1)p_2\Psi}.
\end{align*}
Hence,
\[
    p_2V(x_1-x_2)p_2=W(x_1)p_2,
\]
and consequently
\[
    p_2Dp_2
    =
    p_2\bigl(V(x_1-x_2)-W(x_1)\bigr)p_2
    =
    0.
\]

We now prove the projected square estimate.  Set
\[
    d(x,y):=V(x-y)-W(x).
\]
Thus, $D$ is multiplication by $d(x_1,x_2)$.  Let $\Psi\in\widetilde\gH_N$.

Define
\[
    \eta:=(\bra{\Phi}_1\otimes\1_{2,\ldots,N})\Psi
    \in
    \widetilde\gH_{N-1}.
\]
Then
\[
    p_1\Psi=\Phi(x_1)\otimes\eta(x_2,\ldots,x_N).
\]
Therefore,
\begin{align*}
    \norm{Dp_1\Psi}^2
    &=
    \int_{\Omega^N}
    \abs{d(x_1,x_2)}^2
    \norm{\Phi(x_1)}_{\gX}^2
    \norm{\eta(x_2,\ldots,x_N)}_{\gX^{\otimes(N-1)}}^2
    \,\d x_1\cdots\d x_N   \\
    &=
    \int_{\Omega^{N-1}}
    \left[
        \int_\Omega
        \rho(x_1)
        \abs{V(x_1-x_2)-W(x_1)}^2
        \,\d x_1
    \right]
    \norm{\eta(x_2,\ldots,x_N)}^2
    \,\d x_2\cdots\d x_N .
\end{align*}
By the definition of $\Lambda_t$, the expression in square brackets is bounded
by $\Lambda_t^2$ for almost every $x_2$.  Hence,
\[
    \norm{Dp_1\Psi}^2
    \le
    \Lambda_t^2
    \int_{\Omega^{N-1}}
    \norm{\eta(x_2,\ldots,x_N)}^2
    \,\d x_2\cdots\d x_N.
\]
Since
\[
    \norm{\eta}=\norm{p_1\Psi}\le \norm{\Psi},
\]
we obtain
\[
    \norm{Dp_1\Psi}^2
    \le
    \Lambda_t^2\norm{p_1\Psi}^2
    \le
    \Lambda_t^2\norm{\Psi}^2.
\]
Thus, $Dp_1$ extends to a bounded operator and
$
    \norm{Dp_1}_{\rm op}\le \Lambda_t.
$

Finally, since $D$ is self-adjoint as a multiplication operator,
\[
    p_1D^2p_1=(Dp_1)^*(Dp_1)
\]
in the sense of quadratic forms.  Consequently,
\[
    \norm{p_1D^2p_1}_{\rm op}
    =
    \norm{(Dp_1)^*(Dp_1)}_{\rm op}
    =
    \norm{Dp_1}_{\rm op}^2
    \le
    \Lambda_t^2.
\]
This proves \eqref{eq:p1D2p1-projected-square-lemma}.
\end{proof}


\end{document}